\theoremstyle{thmstyleone}%
\newtheorem{Theorem}{Theorem}
\theoremstyle{thmstyletwo}%
\newtheorem{Lemma}{Lemma}
\theoremstyle{thmstylethree}%
\begin{document}

\title[Article Title]{Shrinkage estimators in zero-inflated Bell regression model with application}


\author*[1]{\fnm{Solmaz} \sur{Seifollahi}}\email{s.seifollahi@tabrizu.ac.ir}

\author[2]{\fnm{Hossein} \sur{Bevrani}}\email{bevrani@gmail.com}
\equalcont{These authors contributed equally to this work.}

\author[3]{\fnm{Zakariya} \sur{Yahya Algamal}}\email{zakariya.algamal@uomosul.edu.iq}
\equalcont{These authors contributed equally to this work.}

\affil*[1]{\orgdiv{Department of Statistics}, \orgname{University of Tabriz}, \city{Tabriz}, \country{Iran}}

\affil[2]{\orgdiv{Department of Statistics}, \orgname{University of Tabriz}, \city{Tabriz},  \country{Iran}}

\affil[3]{\orgdiv{College of Computer Science and Mathematics}, \orgname{University of Mosul}, \city{Mosul}, \country{Iraq}}


\abstract{We propose Stein-type estimators for zero-inflated Bell regression models by incorporating information on model parameters. These estimators combine the advantages of unrestricted and restricted estimators. We derive the asymptotic distributional properties, including bias and mean squared error, for the proposed shrinkage estimators. Monte Carlo simulations demonstrate the superior performance of our shrinkage estimators across various scenarios. Furthermore, we apply the proposed estimators to analyze a real dataset, showcasing their practical utility.}

\keywords{Bell regression, Stein-type estimator, Preliminary test estimator, Monte Carlo simulation, Relative efficiency}

\maketitle

\section{Introduction}
The introduction of generalized linear models (GLMs) by Nelder and Wedderburn \cite{Nelder} extended the application of linear models to handle data that deviate from the assumptions of a normal distribution, such as binary or count data. This breakthrough marked a major milestone in the field of statistics.
In the context of modeling count data, the Poisson regression model is the most commonly used. However, this model is practical only when the mean of the response variable is equal to its variance. Real data often demonstrates that the variance is larger than the mean, a phenomenon known as overdispersion. As a response to this issue, the Bell regression model, proposed by Castellares et al. \cite{Castellares}, offers an alternative approach to handle overdispersion. The majority of studies conducted using the Bell regression model focus on estimating model coefficients in the presence of multicollinearity issues. For instance, the  For example, Ridge estimator \cite{Amin}, Liu estimator \cite{Majid}, Liu-type estimator \cite{Ertan}, Jackknifed ridge estimator \cite{Algamal1} and  Jackknifed L-K estimator \cite{Abduljabbar} have been studied in this context.

When using Bell regression models, another issue that may arise is the frequency of zeros being more than usual. In such situations, the zero-inflated Bell (ZIBell) regression model, introduced in \cite{Lemonte} and later studied in \cite{Ali}, can be used. In the presence of multicollinearity, the Liu and ridge estimators have been considered for ZIBell regression in \cite{Algamal2}.

Another approach to address multicollinearity is to eliminate the inactive or non-significant covariates from the model. This can be achieved by imposing restrictions on the model coefficients, which results in a new estimator referred to as the restricted estimator. To enhance the accuracy of estimation based on restrictions on model coefficients, researchers often use shrinkage methods. These methods have been discussed in various regression models, such as gamma \cite{Akram}, beta \cite{Seifollahi1}, Poisson \cite{Amin}, negative binomial \cite{Zandi}, inverse Gaussian \cite{Akram2} and lately Bell \cite{Seifollahi2} regression models.
For more detail about shrinkage methods one can refer to \cite{Ahmed}. However, there is a lack of studies that consider shrinkage methods in ZIBell models.
The main goal of this paper is to devise shrinkage techniques, such as Stein-type and preliminary estimators, for the ZIBell models. These methods are developed based on specific information about the model coefficients.

The structure of the paper is as follows: The ZIBell regression model and the maximum likelihood estimator of coefficients are described in section \ref{sec2}. The proposed estimators based on the information about coefficients are proposed in Section \ref{sec3}. The statistical features of the proposed estimators are discussed in Section \ref{sec4}.  Section \ref{sec5} concludes a comparison of the proposed estimators performance based on a Mont Carlo simulation by considering various scenarios. Finally, in Section \ref{sec6}, we analyze the empirical data, and the conclusion is provided in Section \ref{sec7}, along with a concluding remark.

\section{Zero-inflated Bell regression model}\label{sec2}
To model count data with excess zeros and overdispersion, Lemonte et al. \cite{Lemonte} recommended applying ZIBell regression.  Suppose that $\boldsymbol{y}=(y_1, y_2, \ldots, y_n)^T$ denotes the vector of $n$ count measurements of a response variable. If $y_i$ follows a ZIBell distribution with parameter $(\mu_i, p_i)$, its probability mass function is given by:
\begin{equation}
f(y_i; \mu_i)= \begin{cases}
                         p_i + (1-p_i) exp(1-e^{W_0(\mu_i)})& y_i=0 \\[5pt]
                         (1-p_i)exp(1-e^{W_0(\mu_i)}) \dfrac{W_0(\mu_i)^{y_i}B_{y_i}}{{y_i}!} & y_i >0
                    \end{cases}
\end{equation}
where $p_i \in [0,1]$ and $B_y=\dfrac{1}{e} \sum_{l=0}^\infty l^n/l!$ denotes the Bell numbers \cite{Bell1, Bell2}. The mean and variance of ZIBell distribution are $E(Y_i)= \mu_i (1-p_i)$  and $Var(Y_i)= \mu_i (1-p_i) [1+W_0(\mu_i)+\mu_i p_i]$. To define the ZIBell regression model, two following link functions can be used:
\begin{align}
\log (\mu_i)&= \eta_{i1}=\boldsymbol{x}_i^T\boldsymbol{\beta} \label{mui}\\
log(\dfrac{p_i}{1-p_i})&=\eta_{i2}= \boldsymbol{z}_i^T\boldsymbol{\gamma}\label{pii}
\end{align}
Here, $\boldsymbol{x}_i=(x_{i1}, x_{i2}, \ldots, x_{ip})$ and $\boldsymbol{z}_i=(z_{i1}, z_{i2}, \ldots, z_{iq})$ denotes the vector of covariates associated with response $y_i$. If $x_{i1}=z_{i1}=1$, the regression models will include the intercept terms. $\boldsymbol{\beta}= (\beta_1, \beta_2, \ldots, \beta_p)^T$ is the coefficients related to the count part of the model and $\boldsymbol{\gamma}= (\gamma_1, \gamma_2, \ldots, \gamma_q)^T$ is the coefficients related to the zero part. We assume that $\boldsymbol{\theta}=(\boldsymbol{\beta}^T,\boldsymbol{\gamma}^T)^T$ is a $k \times 1$ vector of all model coefficients where $k=p+q$.

Based on random sample $\boldsymbol{y}=(y_1, y_2, \ldots, y_n)^T$, the log-likelihood function of the ZIBell model is given by:
\begin{equation}\label{LF}
\ell(\boldsymbol{\theta}; \boldsymbol{y})= \sum_{\{i; y_i=0\}} \log \big[ e^{\eta_{i2}}+e^{1-e^{W_0(\mu_i)}}\big]- \sum_{\{i; y_i>0\}} \big[y_i \log\left(W_0(\mu_i)\right) -e^{W_0(\mu_i)}\big]- \sum_{i=1}^{n} \log\left(1- e^{\eta_{i2}}\right)
\end{equation}

The MLE or unrestricted ZIBell (UNZIB) estimator of $\boldsymbol{\theta}=(\boldsymbol{\beta}^T,\boldsymbol{\gamma}^T)^T$ is obtained by finding the roots of the score function which is defined as $S(\boldsymbol{\theta})= \dfrac{\partial}{\partial \boldsymbol{\theta}^T}\ell(\boldsymbol{\theta}; \boldsymbol{y})$. It is obvious from \eqref{LF} that $S(\boldsymbol{\theta})$ contains some non-linear system equations with no close-form. Hence, The UNZIB estimator need to be obtained through a numerical optimization algorithm such as the Newton-Raphson iterative technique. See \cite{Lemonte} for more details.
If we denote the UNZIB estimator by $\hat{\boldsymbol{\theta}}_{UNZIB}$, under regularity conditions,  $\hat{\boldsymbol{\theta}}_{UNZIB}$ are asymptotically normal, asymptotically unbiased with the covariance matrix given by inverse of Fisher information matrix.  Suppose that $\boldsymbol{F}$ is the $(k\times k)$ Fisher information matrix for $\boldsymbol{\theta}$. Thus, when $n \rightarrow \infty$, we have
\begin{equation}\label{distun}
\sqrt{n}(\hat{\boldsymbol{\theta}}_{UNZIB} - \boldsymbol{\theta})\sim N_k(\boldsymbol{0}, \boldsymbol{F}^{-1}(\boldsymbol{\theta}))
\end{equation}
and
\begin{equation*}
\boldsymbol{F}= \bigg[\begin{matrix} X^TV_1X& X^TV_2Z\\ Z^TV_2X& Z^TV_3Z\end{matrix}\bigg]
\end{equation*}
where $V_1= diag(v_{11}, \ldots, v_{1n})$, $V_2= diag(v_{21}, \ldots, v_{2n})$, $V_3= diag(v_{31}, \ldots, v_{3n})$ and

\begin{align*}
v_{1i}&= v_{1i}^{(0)} I_{\{y_i=0\}}+v_{1i}^{(1)} I_{\{y_i>0\}}\\[5pt]
v_{1i}^{(0)} &=-\dfrac{exp(1+2 \eta_{i1}+\eta_{i2}- e^{W_0(\mu_i)})}{[e^{\eta_{i2}}+exp(1-e^{W_0(\mu_i)})]^2 [1+W_0(\mu_i)]^2}\\[5pt]
& \qquad -\dfrac{exp(1+ \eta_{i1}- e^{W_0(\mu_i)})}{[e^{\eta_{i2}}+exp(1-e^{W_0(\mu_i)})] [1+W_0(\mu_i)]}\bigg\{1- \dfrac{W_0(\mu_i)}{[1+W_0(\mu_i)]^2} \bigg\}\\[5pt]
v_{1i}^{(1)} &= -\dfrac{\mu_i}{1+W_0(\mu_i)}-\dfrac{W_0(\mu_i)}{[1+W_0(\mu_i)]^3}(y_i-\mu_i)\\[5pt]
v_{2i} &= \dfrac{exp(1+ \eta_{i1}+\eta_{i2}- e^{W_0(\mu_i)})}{[e^{\eta_{i2}}-exp(1-e^{W_0(\mu_i)})]^2 [1+W_0(\mu_i)]}I_{\{y_i=0\}}\\[5pt]
v_{3i} &= \dfrac{exp(1+\eta_{i2}- e^{W_0(\mu_i)})I_{\{y_i=0\}}}{[e^{\eta_{i2}}+exp(1-e^{W_0(\mu_i)})]^2 }- \dfrac{e^{\eta_{i2}}}{(1+e^{\eta_{i2}})^2}    \end{align*}

\section{Shrinkage Methods} \label{sec3}
In this section, we introduce shrinkage estimators based on two models: the full model, which includes all model coefficients, and the sub-model, which contains only active model coefficients. To include only significant coefficients in the model, we need to omit the non-significant coefficients by restricting as follows:
\begin{equation}\label{H0}
 \boldsymbol{R}\boldsymbol{\theta}=\boldsymbol{r}
\end{equation}
where $\boldsymbol{R}$ and $\boldsymbol{r}$ are respectively a pre-specified $p_2\times k $ matrix and $p_2\times 1 $ vector based on non-significant coefficients in the model and $p_2<k$ is the number of non-significant coefficients.
For the full model, the model coefficients will be estimated by the maximum likelihood method as explained in previous section. However, for sub-model, the model coefficients estimation can be obtained by maximizing log-likelihood function in \eqref{LF} under restrictions defined in \eqref{H0}.
The restricted ZIBell (REZIB) estimator is obtained based on this approach, according to  \cite{Heyde} as follows:
\begin{equation}\label{RES}
\hat{\boldsymbol{\theta}}_{REZIB}= \hat{\boldsymbol{\theta}}_{UNZIB} - \boldsymbol{F}^{-1}  \boldsymbol{R}^T \big[\boldsymbol{R} \boldsymbol{F}^{-1}\boldsymbol{R}^T\big]^{-1} (\boldsymbol{R}\hat{\boldsymbol{\theta}}_{UNZIB}- \boldsymbol{r})
\end{equation}
The REZIB estimator is entirely dependent on the validity of the restriction in equation \eqref{H0}. Before using the REZIB estimator, it is crucial to verify this restriction. The likelihood ratio test statistics for testing $H_0 :~ \boldsymbol{R}\boldsymbol{\theta}=\boldsymbol{r}$ vs  $H_1 :~ \boldsymbol{R}\boldsymbol{\theta} \neq \boldsymbol{r}$ is given by:
\begin{align}\label{TestSt}
T_n &= 2 \big[\ell(\hat{\boldsymbol{\theta}}_{UNZIB})- \ell(\hat{\boldsymbol{\theta}}_{REZIB})\big]\nonumber \\
&= (\boldsymbol{R}\hat{\boldsymbol{\theta}}_{UNZIB} - \boldsymbol{h})^T \big[\boldsymbol{R} \boldsymbol{F}^{-1}\boldsymbol{R}^T\big]^{-1}(\boldsymbol{R}\hat{\boldsymbol{\theta}}_{UNZIB} - \boldsymbol{r})
\end{align}
where $\ell(\hat{\boldsymbol{\theta}}_{UNZIB})$ and $\ell(\hat{\boldsymbol{\theta}}_{REZIB})$ are the log-likelihood function evaluated at $\hat{\boldsymbol{\theta}}_{UNZIB}$ and $\hat{\boldsymbol{\theta}}_{REZIB}$, respectively. Under $H_0$, for large enough  $n$, $T_n$ asymptotically follows a chi-squared distribution with $p_2$ degrees of freedom.

\subsection{Stein-type method}
Shrinkage methods aim to optimally combine unrestricted and restricted estimators to enhance the performance of the unrestricted estimator. Two well-known methods are the James-Stein and Positive James-Stein methods, which shrink the coefficients towards the center of the parameter space, resulting in improved predictions compared to ordinary least squares or maximum likelihood methods. These methods yield estimators that outperform the maximum likelihood estimator by having a lower or equal mean squared error. This implies that these estimators can provide more accurate estimates, especially when the number of coefficients is large. This subsection provides an explanation of these methods
The James-Stein ZIBell (JSZIB) estimator is given by:
\begin{equation}\label{SE}
\hat{\boldsymbol{\theta}}_{JSZIB}= \hat{\boldsymbol{\theta}}_{REZIB}+ \bigg(1- \dfrac{p_2-2}{T_n}\bigg)
(\hat{\boldsymbol{\theta}}_{REZIB}-\hat{\boldsymbol{\theta}}_{UNZIB}) \qquad p_2 \geq 3
\end{equation}
When using the JSZIB estimator, it is possible to encounter a situation where $T_n < p_2-2$, leading to over-shrinkage during the estimation process. To address this limitation and enhance the estimator's performance in estimation tasks, the Positive James-Stein ZIBell (PJSZIB) estimator is developed. This new estimator aims to mitigate the over-shrinkage issue associated with the JSZIB estimator, thereby improving its overall performance in estimation.
The PJSZIB estimator is given by:
\begin{equation}\label{PSE}
\hat{\boldsymbol{\theta}}_{JSZIB}= \hat{\boldsymbol{\theta}}_{REZIB}+ \bigg(1- \dfrac{p_2-2}{T_n}\bigg)^{+}
(\hat{\boldsymbol{\theta}}_{REZIB}-\hat{\boldsymbol{\theta}}_{UNZIB}) \qquad p_2 \geq 3
\end{equation}
where $z^+= max \{0, z\}$ \cite{Kibria}.

\subsection{Preliminary Test Method}
The preliminary test method in regression models involves using preliminary test estimators to enhance the accuracy of coefficient estimation. These estimators are employed when there are suspicions of model misspecification or when certain linear restrictions on the regression coefficients are suspected. The preliminary test  estimator is designed to address these concerns and improve the estimation process.
The Preliminary test ZIBell (PTZIB) estimator has the following form:
\begin{equation}\label{PTE}
\hat{\boldsymbol{\theta}}_{PTZIB}= \hat{\boldsymbol{\theta}}_{UNZIB}+ (\hat{\boldsymbol{\theta}}_{REZIB}-\hat{\boldsymbol{\theta}}_{UNZIB})I_{(T_n< T_{p_2, \alpha})}
\end{equation}
where $I(.)$ is the indicator function and $\alpha$ is the significant level for testing $H_0$ in \eqref{H0}.
The PTZIB estimator has two options, when $H_0$ is accepted, $\hat{\boldsymbol{\theta}}_{PTZIB}= \hat{\boldsymbol{\theta}}_{REZIB}$ and when $H_0$ is rejected $\hat{\boldsymbol{\theta}}_{PTZIB}= \hat{\boldsymbol{\theta}}_{UNZIB}$.

\section{Properties of Proposed Estimators}\label{sec4}
In the previous section, we studied the asymptotic distributional properties of the proposed estimators when the null hypothesis $H_0$ is incorrect. Suppose that $\boldsymbol{R}\boldsymbol{\theta} = \boldsymbol{r} + \boldsymbol{\zeta}$. For any fixed $\boldsymbol{\zeta}$, the test statistic $T_n$ converges to infinity, and thus, all the proposed estimators will asymptotically converge to the UNZIB estimator. Consequently, we examine the asymptotic properties of the proposed estimators under the subsequent sequence of local alternatives:
\begin{equation}\label{Hn}
H'_{0}: ~ \boldsymbol{R}\boldsymbol{\theta}=\boldsymbol{r}+\dfrac{\boldsymbol{\vartheta}}{\sqrt{n}}
\end{equation}
where $ \boldsymbol{\vartheta}=(\vartheta_1, \ldots,\vartheta_{p_2})^T \in \mathbb{R}^{p_2} $ is a fixed vector with real values. The definition of $H'_{0}$ shows that
null-hypothesis in \eqref{H0} is a special case of \eqref{Hn} and the value of $ \boldsymbol{\vartheta}/\sqrt{n} $ will be the distance between the value of   $\boldsymbol{R}\boldsymbol{\theta}$ in \eqref{Hn} and \eqref{H0}.
\\
Let the asymptotic cumulative distribution of $\boldsymbol{\theta}$ under null-hypothesis in \eqref{Hn} as
\begin{equation*}
G_n (\boldsymbol{\theta})= \lim\limits_{n \to \infty} p\big(\sqrt{n}(\tilde{\boldsymbol{\theta}}-\boldsymbol{\theta}) \vert H'_{0}\big)
\end{equation*}
Hence, the asymptotic distributional bias of $\tilde{\boldsymbol{\theta}}$ is definded as:
\begin{equation}\label{bias}
ADB(\tilde{\boldsymbol{\theta}})= \lim\limits_{n \to \infty} \mathbb{E} \big[ \sqrt{n}(\tilde{\boldsymbol{\theta}}-\boldsymbol{\theta})\vert H'_{0} \big]=\int \ldots \int t \, dG_n(t)
\end{equation}
and the asymptotic distributional mean squared error (ADMSE) of estimator $\tilde{\boldsymbol{\theta}}$ is defined as:
\begin{equation}\label{cov}
ADMSE(\tilde{\boldsymbol{\theta}})= \lim\limits_{n \to \infty} \mathbb{E} \big[ \sqrt{n}(\tilde{\boldsymbol{\theta}}-\boldsymbol{\theta}) \sqrt{n}(\tilde{\boldsymbol{\theta}}-\boldsymbol{\theta})^T\vert H'_{0}\big] = \int \ldots \int tt^T \, dG_n(t)
\end{equation}
To derive the properties of the proposed estimators, we present the following lemmas.
\begin{Lemma}\label{lem1}
Consider
\begin{align*}
U_1&= \sqrt{n}(\hat{\boldsymbol{\theta}}_{UNZIB}-\boldsymbol{\theta}) \\
U_2&= \sqrt{n}(\hat{\boldsymbol{\theta}}_{REZIB}-\boldsymbol{\theta}) \\
U_3&= \sqrt{n}(\hat{\boldsymbol{\theta}}_{UNZIB}-\hat{\boldsymbol{\theta}}_{REZIB})
\end{align*}
Hence, under the null hypothesis in \eqref{Hn}, regularity conditions and when $n$ increases, we have:
\begin{equation}
 \left[ \begin{gathered}
  {U_1} \hfill \\
  {U_2} \hfill \\
  {U_3} \hfill \\
\end{gathered}  \right] \sim N_{3k} \left( {\left[ \begin{array}{c}
  \boldsymbol{0} \hfill \\
 -\boldsymbol{J}\boldsymbol{\vartheta}\hfill \\
 \boldsymbol{J}\boldsymbol{\vartheta} \hfill \\
\end{array}  \right] ,\left[ {\begin{array}{*{20}{c}}
 \boldsymbol{F}^{-1} & \boldsymbol{F}^{-1}- \boldsymbol{J} \boldsymbol{F}^{-1} &  \boldsymbol{J} \boldsymbol{F}^{-1} \\
 &  \boldsymbol{F}^{-1}- \boldsymbol{J} \boldsymbol{F}^{-1} &  \boldsymbol{0} \\
 & &  \boldsymbol{J} \boldsymbol{F}^{-1}
\end{array}}\right]}\right)
\end{equation}
where $\boldsymbol{J}= \boldsymbol{F}^{-1} \boldsymbol{R}^T (\boldsymbol{R} \boldsymbol{F}^{-1}\boldsymbol{R}^T)^{-1}\boldsymbol{R} $.
\end{Lemma}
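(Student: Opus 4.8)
The plan is to reduce the joint limiting law of $(U_1, U_2, U_3)$ to the already-established asymptotic normality of $U_1$ in \eqref{distun}, by exhibiting $U_2$ and $U_3$ as deterministic affine transformations of $U_1$ in the limit. Because an affine image of a Gaussian vector is again Gaussian, and the stacked vector $(U_1^T, U_2^T, U_3^T)^T$ is an affine image of $U_1$ alone, joint asymptotic normality is automatic once the representation is in place; the remaining work is just bookkeeping of the mean vector and the covariance blocks. Note that the resulting $3k$-dimensional law is degenerate, since $U_2 = U_1 - U_3$ forces a linear dependence among the blocks, which is consistent with the stated singular covariance matrix.

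First I would rewrite $U_3$ using the closed form \eqref{RES}. Setting $\boldsymbol{A} = \boldsymbol{F}^{-1}\boldsymbol{R}^T(\boldsymbol{R}\boldsymbol{F}^{-1}\boldsymbol{R}^T)^{-1}$ so that $\boldsymbol{A}\boldsymbol{R} = \boldsymbol{J}$, equation \eqref{RES} gives $\hat{\boldsymbol{\theta}}_{UNZIB} - \hat{\boldsymbol{\theta}}_{REZIB} = \boldsymbol{A}(\boldsymbol{R}\hat{\boldsymbol{\theta}}_{UNZIB} - \boldsymbol{r})$, hence $U_3 = \boldsymbol{A}\,\sqrt{n}(\boldsymbol{R}\hat{\boldsymbol{\theta}}_{UNZIB} - \boldsymbol{r})$. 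Under the local alternative \eqref{Hn} one has $\boldsymbol{r} = \boldsymbol{R}\boldsymbol{\theta} - \boldsymbol{\vartheta}/\sqrt{n}$, so $\sqrt{n}(\boldsymbol{R}\hat{\boldsymbol{\theta}}_{UNZIB} - \boldsymbol{r}) = \boldsymbol{R}U_1 + \boldsymbol{\vartheta}$ and therefore $U_3 = \boldsymbol{J}U_1 + \boldsymbol{A}\boldsymbol{\vartheta}$, where the constant drift $\boldsymbol{A}\boldsymbol{\vartheta}$ is exactly the quantity abbreviated as $\boldsymbol{J}\boldsymbol{\vartheta}$ in the statement. From $U_2 = U_1 - U_3$ I then obtain $U_2 = (\boldsymbol{I}_k - \boldsymbol{J})U_1 - \boldsymbol{A}\boldsymbol{\vartheta}$.

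With these linear representations, the means follow from $\mathbb{E}[U_1] \to \boldsymbol{0}$: the limiting means of $U_1, U_2, U_3$ are $\boldsymbol{0}$, $-\boldsymbol{A}\boldsymbol{\vartheta}$, $\boldsymbol{A}\boldsymbol{\vartheta}$, matching the stated vector. For the covariance blocks I would use $\mathrm{Cov}(U_1) = \boldsymbol{F}^{-1}$ together with the two algebraic identities $\boldsymbol{F}^{-1}\boldsymbol{J}^T = \boldsymbol{J}\boldsymbol{F}^{-1}$ (symmetry of $\boldsymbol{J}\boldsymbol{F}^{-1}$) and $\boldsymbol{J}\boldsymbol{F}^{-1}\boldsymbol{J}^T = \boldsymbol{J}\boldsymbol{F}^{-1}$ (a projection-type identity proved by cancelling the central factor $\boldsymbol{R}\boldsymbol{F}^{-1}\boldsymbol{R}^T$ against its inverse). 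These yield $\mathrm{Cov}(U_1, U_3) = \boldsymbol{F}^{-1}\boldsymbol{J}^T = \boldsymbol{J}\boldsymbol{F}^{-1}$, $\mathrm{Cov}(U_3) = \boldsymbol{J}\boldsymbol{F}^{-1}\boldsymbol{J}^T = \boldsymbol{J}\boldsymbol{F}^{-1}$, $\mathrm{Cov}(U_1, U_2) = \boldsymbol{F}^{-1} - \boldsymbol{J}\boldsymbol{F}^{-1}$, $\mathrm{Cov}(U_2) = \boldsymbol{F}^{-1} - \boldsymbol{J}\boldsymbol{F}^{-1}$, and, most notably, $\mathrm{Cov}(U_2, U_3) = \mathrm{Cov}(U_1, U_3) - \mathrm{Cov}(U_3) = \boldsymbol{0}$, reproducing every entry of the claimed matrix.

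I expect the genuinely delicate step to be the justification of joint asymptotic normality rather than the algebra. The representations $U_3 = \boldsymbol{J}U_1 + \boldsymbol{A}\boldsymbol{\vartheta}$ and $U_2 = (\boldsymbol{I}_k - \boldsymbol{J})U_1 - \boldsymbol{A}\boldsymbol{\vartheta}$ hold exactly only with the fixed matrix $\boldsymbol{F}$ appearing in \eqref{RES}; since in practice a consistent estimate of the Fisher information is used, I would invoke its consistency together with Slutsky's theorem to replace it by $\boldsymbol{F}$ without altering the limit, and then apply the continuous mapping theorem to the affine map $U_1 \mapsto \big(U_1,\,(\boldsymbol{I}_k - \boldsymbol{J})U_1 - \boldsymbol{A}\boldsymbol{\vartheta},\,\boldsymbol{J}U_1 + \boldsymbol{A}\boldsymbol{\vartheta}\big)$. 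The other point requiring care is the local-alternative substitution $\boldsymbol{r} = \boldsymbol{R}\boldsymbol{\theta} - \boldsymbol{\vartheta}/\sqrt{n}$, which is precisely what converts the centered statistic into the deterministic drift $\boldsymbol{\vartheta}$ and thereby produces the nonzero bias terms; everything else is routine.
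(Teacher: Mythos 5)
Your proposal is correct and follows essentially the same route as the paper's Appendix~A: express $U_2$ and $U_3$ as affine functions of $U_1$ via the closed form \eqref{RES} and the local alternative \eqref{Hn}, then read off the means and covariance blocks from $\mathrm{Cov}(U_1)=\boldsymbol{F}^{-1}$ and the projection identity $\boldsymbol{J}\boldsymbol{F}^{-1}\boldsymbol{J}^T=\boldsymbol{J}\boldsymbol{F}^{-1}$. If anything, your treatment is slightly cleaner: you make the notational abuse $\boldsymbol{J}\boldsymbol{\vartheta}$ (properly $\boldsymbol{F}^{-1}\boldsymbol{R}^T(\boldsymbol{R}\boldsymbol{F}^{-1}\boldsymbol{R}^T)^{-1}\boldsymbol{\vartheta}$) explicit, and your computation $\mathrm{Cov}(U_2,U_3)=\mathrm{Cov}(U_1,U_3)-\mathrm{Cov}(U_3)=\boldsymbol{0}$ matches the lemma's stated block, whereas the paper's appendix substitutes $\boldsymbol{J}U_1-\boldsymbol{J}\boldsymbol{\vartheta}$ for $U_2$ at that step and consequently displays $\boldsymbol{J}\boldsymbol{F}^{-1}$ rather than $\boldsymbol{0}$.
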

\begin{proof}
See Appendix \ref{plem1}.
\end{proof}

\begin{Lemma}\label{lem2}
Under the null hypothesis in \eqref{Hn} and regularity conditions, when $n$ increases the test statistic $T_n$ converges to chi-squared distribution with $p_2$ degree of freedoms and non-central parameter $\lambda=\boldsymbol{\vartheta}^T (\boldsymbol{R} \boldsymbol{F}^{-1}\boldsymbol{R}^T)^{-1}\boldsymbol{\vartheta}$.
\end{Lemma}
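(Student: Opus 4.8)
The plan is to rewrite $T_n$ as a quadratic form in an asymptotically normal vector and then apply the standard characterization of the non-central chi-squared law.

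First I would isolate the pivotal quantity $\boldsymbol{R}\hat{\boldsymbol{\theta}}_{UNZIB}-\boldsymbol{r}$ that drives $T_n$. Under the sequence of local alternatives $H'_0$ in \eqref{Hn} we have $\boldsymbol{r}=\boldsymbol{R}\boldsymbol{\theta}-\boldsymbol{\vartheta}/\sqrt{n}$, so that
\[
\sqrt{n}\,(\boldsymbol{R}\hat{\boldsymbol{\theta}}_{UNZIB}-\boldsymbol{r})=\boldsymbol{R}\,\sqrt{n}(\hat{\boldsymbol{\theta}}_{UNZIB}-\boldsymbol{\theta})+\boldsymbol{\vartheta}=\boldsymbol{R}U_1+\boldsymbol{\vartheta},
\]
with $U_1$ as defined in Lemma \ref{lem1}.

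Next, by the asymptotic normality in \eqref{distun} (equivalently, the marginal law of $U_1$ in Lemma \ref{lem1}), $U_1$ converges in distribution to $N_k(\boldsymbol{0},\boldsymbol{F}^{-1})$. Applying the linear map $\boldsymbol{R}$ and the deterministic shift $\boldsymbol{\vartheta}$ then yields
\[
\sqrt{n}\,(\boldsymbol{R}\hat{\boldsymbol{\theta}}_{UNZIB}-\boldsymbol{r})\ \xrightarrow{d}\ N_{p_2}\!\big(\boldsymbol{\vartheta},\,\boldsymbol{R}\boldsymbol{F}^{-1}\boldsymbol{R}^T\big).
\]
I would then observe that $T_n$ in \eqref{TestSt} is precisely the quadratic form of this vector weighted by the inverse of its own limiting covariance $\boldsymbol{\Sigma}=\boldsymbol{R}\boldsymbol{F}^{-1}\boldsymbol{R}^T$; that is, writing $\boldsymbol{W}=\sqrt{n}(\boldsymbol{R}\hat{\boldsymbol{\theta}}_{UNZIB}-\boldsymbol{r})$ we have $T_n=\boldsymbol{W}^T\boldsymbol{\Sigma}^{-1}\boldsymbol{W}$. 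Invoking the continuous mapping theorem together with the classical fact that for $\boldsymbol{W}\sim N_{p_2}(\boldsymbol{\mu},\boldsymbol{\Sigma})$ the form $\boldsymbol{W}^T\boldsymbol{\Sigma}^{-1}\boldsymbol{W}$ is non-central chi-squared with $p_2$ degrees of freedom and non-centrality $\boldsymbol{\mu}^T\boldsymbol{\Sigma}^{-1}\boldsymbol{\mu}$, it follows that $T_n$ converges in distribution to $\chi^2_{p_2}(\lambda)$ with $\lambda=\boldsymbol{\vartheta}^T(\boldsymbol{R}\boldsymbol{F}^{-1}\boldsymbol{R}^T)^{-1}\boldsymbol{\vartheta}$, which is the claim.

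The main obstacle is bookkeeping rather than conceptual. One must ensure that the Fisher information $\boldsymbol{F}$ appearing inside the weight matrix of $T_n$ is on the same scale (full-sample versus per-observation) as the one governing the limit of $U_1$, so that the $\sqrt{n}$ factors cancel and $T_n$ is genuinely $O_p(1)$ rather than degenerate. In addition, since in practice the weight matrix is evaluated at $\hat{\boldsymbol{\theta}}_{UNZIB}$, one must justify replacing $\boldsymbol{R}\hat{\boldsymbol{F}}^{-1}\boldsymbol{R}^T$ by its probability limit $\boldsymbol{R}\boldsymbol{F}^{-1}\boldsymbol{R}^T$ via Slutsky's theorem, using consistency of the plug-in information estimator. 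Both points are routine under the stated regularity conditions.
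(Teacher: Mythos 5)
Your argument is correct, but it takes a different route from the paper: the paper offers no derivation at all for Lemma \ref{lem2}, simply citing Davidson and Lever (1970) on the limiting law of the likelihood-ratio statistic under local alternatives, whereas you give a direct, self-contained proof starting from the quadratic-form expression in \eqref{TestSt} and the asymptotic normality of $U_1$. Your chain of steps --- writing $\sqrt{n}(\boldsymbol{R}\hat{\boldsymbol{\theta}}_{UNZIB}-\boldsymbol{r})=\boldsymbol{R}U_1+\boldsymbol{\vartheta}$ under $H'_0$, passing to the limit $N_{p_2}(\boldsymbol{\vartheta},\boldsymbol{R}\boldsymbol{F}^{-1}\boldsymbol{R}^T)$, and invoking the continuous mapping theorem with the classical quadratic-form characterization of $\chi^2_{p_2}(\lambda)$ --- is sound, and you rightly flag the two bookkeeping issues (the scale of $\boldsymbol{F}$ and the Slutsky step for the plug-in information matrix). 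The one thing to be aware of is that your proof establishes the limit for the Wald-type quadratic form on the second line of \eqref{TestSt}; the lemma as stated concerns $T_n=2[\ell(\hat{\boldsymbol{\theta}}_{UNZIB})-\ell(\hat{\boldsymbol{\theta}}_{REZIB})]$, and the identification of these two expressions (they agree only up to $o_p(1)$ under local alternatives, not exactly) is precisely the nontrivial content that the paper's citation to Davidson--Lever is meant to cover. So the two approaches are complementary: the citation buys the reduction from the likelihood ratio to the quadratic form, while your argument supplies the distribution theory for the quadratic form itself; a fully self-contained proof would combine both.
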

\begin{proof}
See \cite{Davidson}.
\end{proof}

\begin{Lemma}\label{lem3}
Let $\boldsymbol{U}$ be a $r$-dimensional vector follows $N_{r}(\mu _{U}, \pmb{\Sigma}_{U})$. For any measurable function of $h(.)$, we have
\begin{align*}
\mathbb{E}\bigg(\boldsymbol{U} \, h\big(\boldsymbol{U}^t \boldsymbol{U}\big)\bigg) &= \mu_{U} \mathbb{E}\bigg( h\big(\chi_{r+2}^{2}(\lambda)\big)\bigg) \\
\mathbb{E}\bigg(\boldsymbol{U}^t \boldsymbol{U}\, h \big(\boldsymbol{U}^t \boldsymbol{U}\big)\bigg) &=\pmb{\Sigma}_{U} \mathbb{E}\bigg( h\left(\chi_{r+2}^{2}(\lambda)\right)\bigg) + \mu_{Z}' \mu_{U}\mathbb{E}\bigg(h\left(\chi_{r+4}^{2}(\lambda)\right)\bigg),
\end{align*}
where $\lambda= \mu_{U}'\pmb{\Sigma}_{U}^{-1} \mu_{U}$ is the non-centrality parameter.
\end{Lemma}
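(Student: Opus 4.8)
The plan is to reduce the statement to the spherical (identity-covariance) case by a whitening transformation and then to extract both identities from the Poisson-mixture representation of the non-central chi-square distribution, which is the natural device for turning degrees-of-freedom shifts into derivatives in the non-centrality parameter.

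\textbf{Step 1 (reduction to $N_r(\boldsymbol{\nu},I_r)$).} First I would set $\boldsymbol{V}=\boldsymbol{\Sigma}_U^{-1/2}\boldsymbol{U}\sim N_r(\boldsymbol{\nu},I_r)$ with $\boldsymbol{\nu}=\boldsymbol{\Sigma}_U^{-1/2}\mu_U$, so that the standardized quadratic form satisfies $\boldsymbol{U}^t\boldsymbol{\Sigma}_U^{-1}\boldsymbol{U}=\boldsymbol{V}^t\boldsymbol{V}$ and $\|\boldsymbol{\nu}\|^2=\mu_U^t\boldsymbol{\Sigma}_U^{-1}\mu_U=\lambda$. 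Writing $\boldsymbol{U}=\boldsymbol{\Sigma}_U^{1/2}\boldsymbol{V}$ and multiplying on the left (and, for the second identity, on the right) by $\boldsymbol{\Sigma}_U^{1/2}$ shows that the two stated formulas follow at once from their spherical counterparts, namely
\begin{equation*}
\mathbb{E}\big[\boldsymbol{V}\,h(\boldsymbol{V}^t\boldsymbol{V})\big]=\boldsymbol{\nu}\,\mathbb{E}\big[h(\chi^2_{r+2}(\lambda))\big],\qquad
\mathbb{E}\big[\boldsymbol{V}\boldsymbol{V}^t\,h(\boldsymbol{V}^t\boldsymbol{V})\big]=I_r\,\mathbb{E}\big[h(\chi^2_{r+2}(\lambda))\big]+\boldsymbol{\nu}\boldsymbol{\nu}^t\,\mathbb{E}\big[h(\chi^2_{r+4}(\lambda))\big],
\end{equation*}
using $\boldsymbol{\Sigma}_U^{1/2}\boldsymbol{\nu}=\mu_U$ and $\boldsymbol{\Sigma}_U^{1/2}\boldsymbol{\nu}\boldsymbol{\nu}^t\boldsymbol{\Sigma}_U^{1/2}=\mu_U\mu_U^t$. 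Hence it suffices to prove the lemma for $\boldsymbol{V}\sim N_r(\boldsymbol{\nu},I_r)$.

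\textbf{Step 2 (differentiation under the integral sign).} Let $\psi(\boldsymbol{\nu})=\mathbb{E}[h(\boldsymbol{V}^t\boldsymbol{V})]$. Differentiating the Gaussian density through the identity $\partial_{\nu_j}e^{-\|\boldsymbol{v}-\boldsymbol{\nu}\|^2/2}=(v_j-\nu_j)e^{-\|\boldsymbol{v}-\boldsymbol{\nu}\|^2/2}$ gives $\mathbb{E}[V_j\,h]=\nu_j\psi+\partial_{\nu_j}\psi$, and a second differentiation gives an analogous expression for $\mathbb{E}[V_iV_j\,h]$ in terms of $\psi$, $\partial_{\nu_i}\psi$ and $\partial^2_{\nu_i\nu_j}\psi$. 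Because $\boldsymbol{V}^t\boldsymbol{V}$ depends on $\boldsymbol{\nu}$ only through $\delta:=\|\boldsymbol{\nu}\|^2$, I would write $\psi=\tilde\psi(\delta)$ and apply the chain rule, which reduces these to $\mathbb{E}[V_j\,h]=\nu_j\big(\tilde\psi+2\tilde\psi'\big)$ and $\mathbb{E}[V_iV_j\,h]=\delta_{ij}\big(\tilde\psi+2\tilde\psi'\big)+\nu_i\nu_j\big(\tilde\psi+4\tilde\psi'+4\tilde\psi''\big)$, where primes denote $d/d\delta$.

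\textbf{Step 3 (Poisson mixture and the raising operator).} The key remaining ingredient is the representation $\tilde\psi(\delta)=\sum_{k\ge 0}w_k(\delta)\,a_k$, where $w_k(\delta)=e^{-\delta/2}(\delta/2)^k/k!$ is the $\mathrm{Poisson}(\delta/2)$ mass function and $a_k=\mathbb{E}[h(\chi^2_{r+2k})]$ refers to the \emph{central} chi-square; this is exactly the statement that $\boldsymbol{V}^t\boldsymbol{V}\sim\chi^2_r(\delta)$. Using the recurrence $w_k'(\delta)=\tfrac12\big(w_{k-1}(\delta)-w_k(\delta)\big)$ (with $w_{-1}\equiv 0$) and reindexing, I would show that the operator $L:=\mathrm{id}+2\,d/d\delta$ raises the degrees of freedom by two, i.e. $L\tilde\psi=\sum_k w_k a_{k+1}=\mathbb{E}[h(\chi^2_{r+2}(\delta))]$ and $L^2\tilde\psi=\sum_k w_k a_{k+2}=\mathbb{E}[h(\chi^2_{r+4}(\delta))]$. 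Since $\tilde\psi+2\tilde\psi'=L\tilde\psi$ and $\tilde\psi+4\tilde\psi'+4\tilde\psi''=L^2\tilde\psi$, substituting into the expressions of Step 2 and re-assembling the components into vector and matrix form yields precisely the spherical identities of Step 1, with $\delta=\lambda$, completing the proof.

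I expect the main obstacle to be Step 3: verifying the mass-function recurrence and carrying out the telescoping reindexing cleanly enough to identify $L$ as a degrees-of-freedom raising operator, since everything else is bookkeeping. A secondary technical point is justifying the interchange of differentiation and integration in Step 2, which is legitimate under a mild integrability condition on $h$ (finiteness of the relevant expectations), and I would invoke dominated convergence to that end; the result is otherwise classical and also follows from Judge and Bock / Saleh-type Stein identities.
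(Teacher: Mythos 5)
Your proposal is correct, but it is worth noting that the paper does not prove this lemma at all: it simply cites Judge and Bock (1978), where these identities appear as standard results on expectations of functions of noncentral quadratic forms. Your argument is therefore a genuinely self-contained alternative, and a sound one: the whitening reduction, the differentiation of $\psi(\boldsymbol{\nu})=\mathbb{E}[h(\boldsymbol{V}^t\boldsymbol{V})]$ under the integral sign, and the identification of $L=\mathrm{id}+2\,d/d\delta$ as the degrees-of-freedom raising operator via the Poisson-mixture weights $w_k(\delta)=e^{-\delta/2}(\delta/2)^k/k!$ all check out (I verified the recurrence $w_k'=\tfrac12(w_{k-1}-w_k)$, the resulting $L\tilde\psi=\mathbb{E}[h(\chi^2_{r+2}(\delta))]$ and $L^2\tilde\psi=\mathbb{E}[h(\chi^2_{r+4}(\delta))]$, and the Step 2 expressions $\mathbb{E}[V_jh]=\nu_j(\tilde\psi+2\tilde\psi')$ and $\mathbb{E}[V_iV_jh]=\delta_{ij}(\tilde\psi+2\tilde\psi')+\nu_i\nu_j(\tilde\psi+4\tilde\psi'+4\tilde\psi'')$). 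What the citation buys the paper is brevity; what your derivation buys is transparency about exactly where the $r+2$ and $r+4$ shifts come from, plus an explicit statement of the integrability hypotheses needed to differentiate under the integral and term by term in the mixture series. One caveat you should make explicit rather than gloss over: your whitening step establishes the identities with $h$ evaluated at $\boldsymbol{U}^t\boldsymbol{\Sigma}_U^{-1}\boldsymbol{U}$, not at $\boldsymbol{U}^t\boldsymbol{U}$ as the lemma is literally written, since $h(\boldsymbol{U}^t\boldsymbol{U})=h(\boldsymbol{V}^t\boldsymbol{\Sigma}_U\boldsymbol{V})\neq h(\boldsymbol{V}^t\boldsymbol{V})$ unless $\boldsymbol{\Sigma}_U=\boldsymbol{I}_r$. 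The version you prove is the correct one (the literal statement fails for general $\boldsymbol{\Sigma}_U$, e.g.\ with $h(t)=t$ and $\mu_U=\boldsymbol{0}$), and it is consistent with the paper's own $\lambda=\mu_U'\boldsymbol{\Sigma}_U^{-1}\mu_U$ and with how the lemma is applied in the appendices, where the quadratic form is the test statistic $T_n$; so this is a typo in the paper's statement (as is $\mu_Z'\mu_U$ for $\mu_U\mu_U'$ and $\boldsymbol{U}^t\boldsymbol{U}$ for $\boldsymbol{U}\boldsymbol{U}^t$ on the left of the second identity), but your write-up should say which version it is actually proving.
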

\begin{proof}
See \cite{judge}.
\end{proof}
In the following theorems, we will establish the asymptotic properties of the suggested estimators.
\begin{Theorem}\label{thm1}
Under the null hypothesis stated in equation \eqref{Hn} and assuming regularity conditions hold,  the asymptotic distributional bias of the suggested estimators is as follows:
\begin{align*}
ADB(\hat{\boldsymbol{\theta}}^{REZIB}) &= -\boldsymbol{J \vartheta}\\
ADB(\hat{\boldsymbol{\theta}}^{JSZIB}) &= -(p_2-2) \boldsymbol{J \vartheta} \mathbb{E}(\chi^{-2}_{p_2} (\lambda))\\
ADB(\hat{\boldsymbol{\theta}}^{PJSZIB}) &=  -(p_2-2) \boldsymbol{J \vartheta}\bigg\{\mathbb{E}\bigg[ \chi^{-2}_{q+2} (\lambda) \bigg]-\mathbb{E}\bigg[\chi^{-2}_{p_2+2} (\lambda)I_{(\chi^{-2}_{p_2+2} (\lambda)< p_2-2)} \bigg]\bigg\}\\
&\qquad \qquad - \boldsymbol{J \vartheta}  \Phi_{p_2+2} (p_2-2; \lambda)\\
ADB(\hat{\boldsymbol{\theta}}^{PTZIB}) &= -\boldsymbol{J \vartheta}\Phi_{p_2} \big(\chi^2_{(p_2, \alpha)}; \lambda\big)
\end{align*}
where  $ \Phi_c \big(a; \lambda\big) = p(\chi^{2}_c (\lambda)\leq  a)$ and $\mathbb{E}(\chi^{2k}_c (\lambda))$ is the $k$th order moment of a non-central $\chi^2$ distribution with $c$ degrees of freedom and non-central parameter $\lambda= \boldsymbol{\vartheta}^T(\boldsymbol{R} \boldsymbol{F}^{-1}\boldsymbol{R}^T)^{-1}\boldsymbol{\vartheta}$.
\end{Theorem}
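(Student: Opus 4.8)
The plan is to read the asymptotic distributional bias in \eqref{bias} as the mean of the limiting law of $\sqrt n(\tilde{\boldsymbol\theta}-\boldsymbol\theta)$ and to reduce each estimator to a function of the jointly Gaussian vectors $U_1,U_2,U_3$ of Lemma \ref{lem1} and the quadratic form $T_n$ of Lemma \ref{lem2}. The restricted case is immediate: since $\sqrt n(\hat{\boldsymbol\theta}_{REZIB}-\boldsymbol\theta)=U_2$, formula \eqref{bias} and the mean vector in Lemma \ref{lem1} give $ADB(\hat{\boldsymbol\theta}_{REZIB})=\mathbb{E}[U_2]=-\boldsymbol{J\vartheta}$ with no further work. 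The three shrinkage estimators all differ from $\hat{\boldsymbol\theta}_{UNZIB}$ by a scalar (random) multiple of $\hat{\boldsymbol\theta}_{UNZIB}-\hat{\boldsymbol\theta}_{REZIB}$, so the task is to evaluate expectations of $U_3$ weighted by functions of $T_n$.

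The engine for this is a common representation of $U_3$ and $T_n$ in terms of a single Gaussian vector. Writing $\boldsymbol W=\sqrt n(\boldsymbol R\hat{\boldsymbol\theta}_{UNZIB}-\boldsymbol r)$, the restricted estimator \eqref{RES} gives $U_3=\boldsymbol F^{-1}\boldsymbol R^T(\boldsymbol R\boldsymbol F^{-1}\boldsymbol R^T)^{-1}\boldsymbol W$ and the test statistic \eqref{TestSt} gives $T_n=\boldsymbol W^T(\boldsymbol R\boldsymbol F^{-1}\boldsymbol R^T)^{-1}\boldsymbol W$, so both are functions of $\boldsymbol W\sim N_{p_2}(\boldsymbol\vartheta,\boldsymbol R\boldsymbol F^{-1}\boldsymbol R^T)$. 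Standardizing by $\boldsymbol V=(\boldsymbol R\boldsymbol F^{-1}\boldsymbol R^T)^{-1/2}\boldsymbol W\sim N_{p_2}(\boldsymbol\mu_V,\boldsymbol I)$ with $\boldsymbol\mu_V=(\boldsymbol R\boldsymbol F^{-1}\boldsymbol R^T)^{-1/2}\boldsymbol\vartheta$ turns $T_n$ into $\boldsymbol V^T\boldsymbol V$ and $U_3$ into $\boldsymbol A\boldsymbol V$, where $\boldsymbol A=\boldsymbol F^{-1}\boldsymbol R^T(\boldsymbol R\boldsymbol F^{-1}\boldsymbol R^T)^{-1/2}$. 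The two identities $\boldsymbol A\boldsymbol\mu_V=\boldsymbol{J\vartheta}$ and $\boldsymbol\mu_V^T\boldsymbol\mu_V=\lambda$ are exactly what will collapse every bias onto the direction $\boldsymbol{J\vartheta}$, and this is precisely the format required by Lemma \ref{lem3}.

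For JSZIB I would work with \eqref{SE} in its equivalent form $\hat{\boldsymbol\theta}_{JSZIB}=\hat{\boldsymbol\theta}_{UNZIB}-(p_2-2)T_n^{-1}(\hat{\boldsymbol\theta}_{UNZIB}-\hat{\boldsymbol\theta}_{REZIB})$, so that $\sqrt n(\hat{\boldsymbol\theta}_{JSZIB}-\boldsymbol\theta)=U_1-(p_2-2)U_3/T_n$ and, since $\mathbb{E}[U_1]=\boldsymbol 0$, $ADB(\hat{\boldsymbol\theta}_{JSZIB})=-(p_2-2)\,\mathbb{E}[U_3/T_n]$. Applying Lemma \ref{lem3} with $h(x)=x^{-1}$ to $\mathbb{E}[\boldsymbol A\boldsymbol V/(\boldsymbol V^T\boldsymbol V)]=\boldsymbol A\boldsymbol\mu_V\,\mathbb{E}(\chi^{-2}_{p_2+2}(\lambda))=\boldsymbol{J\vartheta}\,\mathbb{E}(\chi^{-2}_{p_2+2}(\lambda))$ gives the stated bias, the degree-of-freedom shift being supplied automatically by the lemma. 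For PTZIB, \eqref{PTE} yields $\sqrt n(\hat{\boldsymbol\theta}_{PTZIB}-\boldsymbol\theta)=U_1-U_3\,I(T_n<\chi^2_{p_2,\alpha})$, so taking $h(x)=I(x<\chi^2_{p_2,\alpha})$ in Lemma \ref{lem3} produces $\mathbb{E}[U_3\,I(T_n<\chi^2_{p_2,\alpha})]=\boldsymbol{J\vartheta}\,\Phi_{p_2+2}(\chi^2_{p_2,\alpha};\lambda)$. For PJSZIB I would split the positive part through $(1-(p_2-2)/T_n)^+=(1-(p_2-2)/T_n)-(1-(p_2-2)/T_n)\,I(T_n<p_2-2)$, which expresses the estimator as the JSZIB estimator plus a correction supported on $\{T_n<p_2-2\}$; two applications of Lemma \ref{lem3}, with $h(x)=x^{-1}I(x<p_2-2)$ and with $h(x)=I(x<p_2-2)$, assemble the bracketed moments and the distribution-function term in the claimed formula.

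The main obstacle is not any single algebraic step but the justification of the interchange of limit and expectation built into \eqref{bias}. The finite-$n$ estimators involve the unbounded weights $1/T_n$ and $T_n^{-1}I(T_n<p_2-2)$, so passing from $\mathbb{E}[\,\cdot\mid H'_0]$ to moments of the limiting non-central $\chi^2$ laws requires a uniform-integrability argument resting on the joint asymptotic normality of Lemma \ref{lem1} and the convergence in Lemma \ref{lem2}; some care is also needed because $1/T_n$ is singular on the small-probability event where $T_n$ is near zero. Once integrability is secured, the computation is routine: the identity $\boldsymbol A\boldsymbol\mu_V=\boldsymbol{J\vartheta}$ forces the common direction and Lemma \ref{lem3} delivers each moment and tail probability with the correct number of degrees of freedom.
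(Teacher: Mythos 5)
Your proposal is correct and follows essentially the same route as the paper's proof: write each estimator as $U_2$ (equivalently $U_1$) plus a function of $T_n$ times $U_3$, read off the means from Lemma \ref{lem1}, evaluate the weighted expectations via Lemma \ref{lem3}, and split the positive part for PJSZIB exactly as the paper does. Your explicit representation of $U_3$ and $T_n$ through a common Gaussian vector, and your remark on uniform integrability, merely make the invocation of Lemma \ref{lem3} more careful than the paper's direct factorization $\mathbb{E}[U_3\,h(T_n)]=\mathbb{E}(U_3)\,\mathbb{E}[h(\chi^2_{p_2+2}(\lambda))]$; the degrees-of-freedom shifts you obtain (e.g.\ $\chi^{-2}_{p_2+2}$ for JSZIB and $\Phi_{p_2+2}$ for PTZIB) agree with the paper's derivation in Appendix \ref{proof1}, not with the $\chi^{-2}_{p_2}$ and $\Phi_{p_2}$ misprinted in the theorem statement.
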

\begin{proof}
See Appendix \ref{proof1}.
\end{proof}
We present the following theorem regarding the ADMSE of the estimators:
\begin{Theorem}\label{thm2}
Under the null hypothesis stated in equation \eqref{Hn} and assuming regularity conditions hold, the ADMSE of the suggested estimators is as follows:
\begin{align*}
ADMSE(\hat{\boldsymbol{\theta}}_{UNZIB}) &= \boldsymbol{F}^{-1}\\
ADMSE(\hat{\boldsymbol{\theta}}_{REZIB}) &=  \boldsymbol{F}^{-1}- \boldsymbol{J}\boldsymbol{F}^{-1} +  \boldsymbol{W}\\
ADMSE(\hat{\boldsymbol{\theta}}_{JSZIB}) &=\boldsymbol{F}^{-1}+ 2 (p_2-2) \boldsymbol{W} \bigg\{ \mathbb{E}\bigg[ \chi^{-2}_{q+2} (\lambda)\bigg]-\mathbb{E}\bigg[ \chi^{-2}_{q+4} (\lambda) \bigg] +\dfrac{p_2-2}{2}\mathbb{E}\bigg[ \chi^{-4}_{q+4} (\lambda) \bigg] \bigg\}\nonumber\\
& \qquad -2 (p_2-2)\boldsymbol{J}\boldsymbol{F}^{-1} \bigg\{ \mathbb{E}\bigg[ \chi^{-2}_{q+4} (\lambda) \bigg] -\dfrac{p_2-2}{2}\mathbb{E}\bigg[\chi^{-4}_{q+2} (\lambda)\bigg]\bigg\}\\
ADMSE(\hat{\boldsymbol{\theta}}_{PJSZIB}) &=ADMSE(\hat{\boldsymbol{\theta}}^{JSZIB}) - 2 \boldsymbol{W} \Phi_{p_2+2}(T_{(p_2, \alpha)}; \lambda)\nonumber \\
& \qquad-\boldsymbol{J}\boldsymbol{F}^{-1}\mathbb{E}\bigg[\bigg(1- (p_2-2)\chi^{-2}_{p_2+2} (\lambda)\bigg)^2 I_{\big(\chi^{2}_{p_2+2} (\lambda)< p_2-2\big)}\bigg]\nonumber \\
& \qquad - \boldsymbol{W} \mathbb{E}\bigg[\bigg(1- (p_2-2)\chi^{-2}_{p_2+4} (\lambda)\bigg)^2 I_{\big(\chi^{2}_{p_2+4} (\lambda)< p_2-2\big)}\bigg]\nonumber \\
ADMSE(\hat{\boldsymbol{\theta}}_{PTZIB}) &=\boldsymbol{F}^{-1} + \boldsymbol{J}\boldsymbol{F}^{-1}\Phi_{p_2+2}(T_{(p_2, \alpha)}; \lambda)+\boldsymbol{W}\Phi_{p_2+4}(T_{(p_4, \alpha)}; \lambda) \nonumber
\end{align*}
where $\boldsymbol{W}=\boldsymbol{J\vartheta}\boldsymbol{\vartheta}^T \boldsymbol{J}^T$
\end{Theorem}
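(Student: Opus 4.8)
The plan is to reduce every estimator to an affine-plus-shrinkage expression in the jointly normal vectors $U_1, U_2, U_3$ of Lemma \ref{lem1} and then read off the required second moments. The two benchmark cases are immediate: since $\sqrt{n}(\hat{\boldsymbol{\theta}}_{UNZIB}-\boldsymbol{\theta})=U_1\sim N_k(\boldsymbol{0},\boldsymbol{F}^{-1})$, its ADMSE is just its covariance $\boldsymbol{F}^{-1}$; and since $\sqrt{n}(\hat{\boldsymbol{\theta}}_{REZIB}-\boldsymbol{\theta})=U_2\sim N_k(-\boldsymbol{J\vartheta},\boldsymbol{F}^{-1}-\boldsymbol{J}\boldsymbol{F}^{-1})$, the identity $\mathbb{E}[U_2U_2^T]=\mathrm{Cov}(U_2)+\mathbb{E}[U_2]\mathbb{E}[U_2]^T$ gives $\boldsymbol{F}^{-1}-\boldsymbol{J}\boldsymbol{F}^{-1}+\boldsymbol{W}$ with $\boldsymbol{W}=\boldsymbol{J\vartheta}\boldsymbol{\vartheta}^T\boldsymbol{J}^T$, exactly as stated.

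For the James–Stein estimator I would first rewrite the definition \eqref{SE} in the form $\sqrt{n}(\hat{\boldsymbol{\theta}}_{JSZIB}-\boldsymbol{\theta})=U_1-(p_2-2)T_n^{-1}U_3$ and expand the outer product into $\mathbb{E}[U_1U_1^T]$, the symmetric cross term $-(p_2-2)\mathbb{E}[T_n^{-1}(U_1U_3^T+U_3U_1^T)]$, and the quadratic term $(p_2-2)^2\mathbb{E}[T_n^{-2}U_3U_3^T]$. The device that makes these tractable is the asymptotic identity $T_n=U_3^T\boldsymbol{F}U_3$, which shows $T_n$ is a quadratic form in $U_3$ alone. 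After the standardizing change of variables $\boldsymbol{U}=(\boldsymbol{R}\boldsymbol{F}^{-1}\boldsymbol{R}^T)^{-1/2}(\boldsymbol{R}U_1+\boldsymbol{\vartheta})$, which has identity covariance and a mean of squared length $\lambda=\boldsymbol{\vartheta}^T(\boldsymbol{R}\boldsymbol{F}^{-1}\boldsymbol{R}^T)^{-1}\boldsymbol{\vartheta}$ and satisfies $T_n=\boldsymbol{U}^T\boldsymbol{U}$, every shrinkage weight becomes a function of $\boldsymbol{U}^T\boldsymbol{U}$ and Lemma \ref{lem3} applies verbatim. Taking $h(t)=t^{-2}$ converts $\mathbb{E}[T_n^{-2}U_3U_3^T]$ into a combination of $\boldsymbol{J}\boldsymbol{F}^{-1}$ and $\boldsymbol{W}$ weighted by $\mathbb{E}[\chi^{-2}_{p_2+2}(\lambda)]$ and $\mathbb{E}[\chi^{-4}_{p_2+4}(\lambda)]$, the two shifted degrees of freedom coming precisely from the two terms on the right-hand side of Lemma \ref{lem3} (the degrees of freedom printed as $q$ in the statement being $p_2$).

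The main obstacle is the cross term $\mathbb{E}[T_n^{-1}U_3U_1^T]$, since $T_n^{-1}U_3$ depends on $U_1$ through $\boldsymbol{R}U_1$ and the two factors are not independent. Here I would exploit the covariance structure of Lemma \ref{lem1}: the decomposition $U_1=U_2+U_3$ together with the vanishing off-diagonal block $\mathrm{Cov}(U_2,U_3)=\boldsymbol{0}$ makes $U_2$ and $U_3$ asymptotically independent, so $\mathbb{E}[T_n^{-1}U_3U_1^T]=\mathbb{E}[T_n^{-1}U_3U_3^T]+\mathbb{E}[T_n^{-1}U_3]\,\mathbb{E}[U_2]^T$, and the first Stein identity of Lemma \ref{lem3} (with $h(t)=t^{-1}$) evaluates $\mathbb{E}[T_n^{-1}U_3]=\boldsymbol{J\vartheta}\,\mathbb{E}[\chi^{-2}_{p_2+2}(\lambda)]$. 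Collecting the three contributions then yields the stated ADMSE of $\hat{\boldsymbol{\theta}}_{JSZIB}$.

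The preliminary-test and positive-part estimators are handled by the same machinery with indicator weights. For PTZIB I would start from $\sqrt{n}(\hat{\boldsymbol{\theta}}_{PTZIB}-\boldsymbol{\theta})=U_1-U_3\,I_{(T_n<T_{(p_2,\alpha)})}$ and apply Lemma \ref{lem3} with $h(t)=I_{(t<T_{(p_2,\alpha)})}$, whose shifted-df expectations are exactly the cumulative probabilities $\Phi_{p_2+2}$ and $\Phi_{p_2+4}$ in the statement. For PJSZIB I would split the positive part as $(1-(p_2-2)T_n^{-1})^{+}=(1-(p_2-2)T_n^{-1})-(1-(p_2-2)T_n^{-1})I_{(T_n<p_2-2)}$, so that $\hat{\boldsymbol{\theta}}_{PJSZIB}$ equals $\hat{\boldsymbol{\theta}}_{JSZIB}$ minus a truncated correction; expanding its ADMSE reproduces $ADMSE(\hat{\boldsymbol{\theta}}_{JSZIB})$ plus the indicator-weighted corrections $\mathbb{E}[(1-(p_2-2)\chi^{-2}_{p_2+2}(\lambda))^2 I_{(\chi^2_{p_2+2}(\lambda)<p_2-2)}]$ and its $\chi^2_{p_2+4}$ analogue. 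The delicate point throughout is the bookkeeping of the truncation-induced indicators, ensuring that the degree-of-freedom-$(+2)$ shift attaches to $\boldsymbol{J}\boldsymbol{F}^{-1}$ and the $(+4)$ shift to $\boldsymbol{W}$; once those matrix coefficients are matched, the final assembly of terms is routine.
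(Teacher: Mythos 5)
Your proposal is correct and follows essentially the same route as the paper's own proof: represent each estimator through the jointly normal vectors $U_1,U_2,U_3$ of Lemma \ref{lem1}, expand the outer products, kill the cross terms via the zero covariance between $U_2$ and $U_3$, and evaluate the shrinkage-weighted moments with Lemma \ref{lem3} (your expansion of the James--Stein case around $U_1-(p_2-2)T_n^{-1}U_3$ instead of $U_2+(1-(p_2-2)/T_n)U_3$ is only a cosmetic reorganization, and you correctly identify that the $q$'s in the stated $ADMSE(\hat{\boldsymbol{\theta}}_{JSZIB})$ should read $p_2$).
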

\begin{proof}
See Appendix \ref{proof2}.
\end{proof}

\section{Simulation Study}\label{sec5}
In this section, we will perform a simulation study to evaluate the performance of the suggested estimators discussed in the previous section. The benchmark for comparison will be the simulated relative efficiency, which is defined as follows:
\begin{equation}\label{SRE}
SRE(\hat{\boldsymbol{\theta}})= \dfrac{SMSE(\hat{\boldsymbol{\theta}}_{UNZIB})} {SMSE(\hat{\boldsymbol{\theta}})}
\end{equation}
where $\hat{\boldsymbol{\theta}}$ is one of the suggested estimators in this paper.
The data are generated from a ZIBell regression such that:
\begin{equation*}
\mu_i= \exp \{\beta_1 x_{i1}+ \beta_2 x_{i2}+ \ldots + \beta_p x_{ip} \} \qquad i=1, 2, \ldots, n \quad \text{and} \quad p=5, 7
\end{equation*}
and
\begin{equation*}
logit(\pi_i)=\log(\dfrac{\pi_i}{1-\pi_i})= \gamma_1 z_{i1}+ \gamma_2 z_{i2}+ \gamma_3 z_{i3}   \qquad i=1, 2, \ldots, n
\end{equation*}
where $x_{i1}=z_{i1}=1$ to consider a model with intercepts. The covariates observations for $\mu_i$ are generated from standard normal distribution and for $\pi_i$ are generated from standard exponential distribution. The real value of the parameters are considered as
\begin{align}
\boldsymbol{\beta}^T&= (0.5, 1, -1.5, \underbrace{0, \ldots, 0}_{p-3})^T \label{breal}\\
 \boldsymbol{\gamma}^T&= ( 0.5, -1, 0)^T\label{greal}
\end{align}
The these values  indicates that $\beta_4, \ldots, \beta_p$ and $\gamma_3$ are the inactive parameters in our model. Therefore, the hypothesis matrix is made by using these parameters, such that
\begin{equation*}
\boldsymbol{H}_{(p-2)\times (p+3)}=\begin{pmatrix}
 0& 0& 0& 1& 0& \ldots &0& 0& 0 & 0& 0\\
 0& 0& 0& 0& 1& \ldots &0& 0& 0 & 0& 0\\
 \vdots & \vdots & \vdots &\vdots & \vdots & \ddots &\vdots &\vdots &\vdots & \vdots & \vdots\\
 0& 0& 0& 0& 0& \ldots &1& 0& 0 & 0& 0\\
 0& 0& 0& 0& 0& \ldots &0& 0& 0 & 0& 1\\
\end{pmatrix}
, \quad \boldsymbol{h}_{(p-2)\times 1}
=
\begin{pmatrix}
0\\
0\\
\vdots\\
0\\
0
\end{pmatrix}
\end{equation*}
For the next step, we define $\delta^2= \| \boldsymbol{\theta}- \boldsymbol{\theta}^{real} \|$ where $\| . \|$ is the Euclidean norm. Due to this definition,  $\delta^2$ is the departure measure of $\boldsymbol{\theta}$ from its real value. To understand the effect of departure measure, we consider
\begin{equation*}
\boldsymbol{\beta}^T= (0.5, 1, -1.5, \underbrace{0, \ldots, 0}_{p-3})^T \quad \text{and} \quad \boldsymbol{\gamma}^T= ( 0.5, -1, \delta)^T
\end{equation*}
and the real value of parameters is defined in \eqref{breal} and \eqref{greal}. Thus, when $\delta^2=0$ means the real value of $\boldsymbol{\theta}=0$ but as $\delta^2$ increases, we get away from the real value of $\boldsymbol{\theta}$.
We assumed different values of $\delta^2$ to be a sequence of numbers from zero to two with a step length of 0.2.
On the other hand, for different scenarios, we choose $n=50, 100, 200$ and $p=5, 7$.

We repeated each scenario $1000$ times to calculate the value of SMSE in \eqref{SRE} by the following definition:
\begin{equation*}
SMSE(\hat{\boldsymbol{\theta}})= \dfrac{1}{1000} \sum_{r=1}^{1000} (\hat{\boldsymbol{\theta}}_r-\boldsymbol{\theta})^T(\hat{\boldsymbol{\theta}}_r-\boldsymbol{\theta})
\end{equation*}
where $ \hat{\boldsymbol{\theta}}_r $ is the estimated value of $ \boldsymbol{\theta} $ at the $r$th repetition. The definition of SRE shows that the $\hat{\boldsymbol{\theta}}$ outperformed the $\hat{\boldsymbol{\theta}}^{UN}$ if SRE is larger than one.
The definition of SRE indicates that $\hat{\boldsymbol{\theta}}$ performs better than $\hat{\boldsymbol{\beta}}_{UNZIB}$ whenever the SRE value is larger than one. The results of the simulation are presented in Table \ref{Tab1}. For ease of comparison, the results are presented graphically in Figure \ref{fig1}.
Based on the table presented, the following conclusions can be drawn:
\begin{itemize}
\item When the value of $\delta^2$ is zero, the REZIB's performance is optimal in all scenarios. However, as the value of $\delta^2$ increases, the SRE of the REZIB decreases sharply, eventually becoming inefficient.
\item When $\delta^2$ is zero, the PJSZIB outperforms the JSZIB. Although, as the value of $\delta^2$ increases, the performance of both estimators approaches one, but the PJSZIB remains superior to the JSZIB.
\item The PTZIB outperforms the James-Stein type estimators when the $\delta^2=0$.
\item Increasing the number of covariates generally leads to an increase in the SRE of all estimators, regardless of the value of $\delta^2$.
\item Generally, as the sample size increases, the SRE of all estimators tends to increase.
\end{itemize}

\begin{figure}
\centering
\includegraphics[scale=0.85]{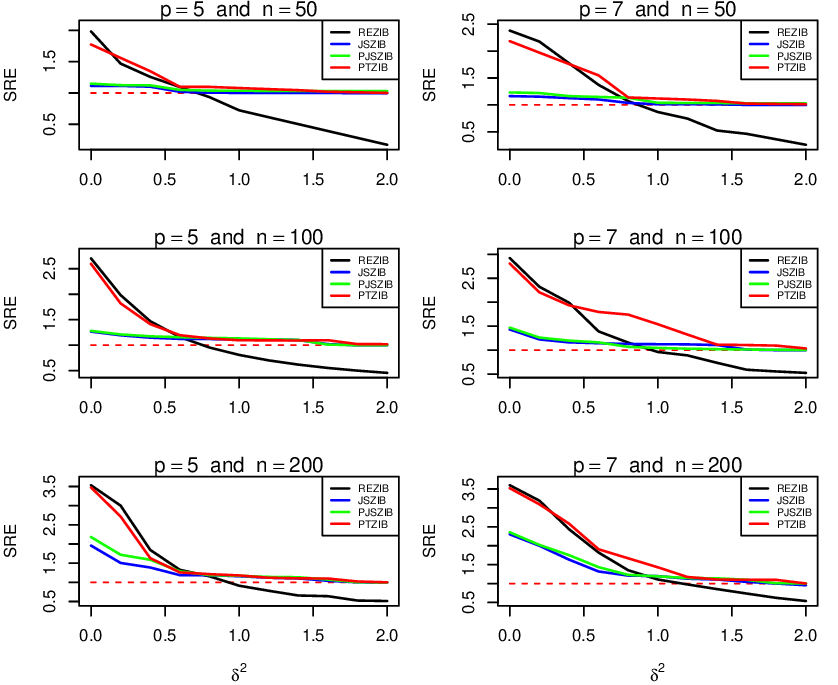} 
\caption{The SRE of suggested estimators with different values of $\delta^2$.}\label{fig1}
\end{figure}

\begin{table}[h]
    \caption{The SRE of proposed estimators.}\label{Tab1}
\begin{tabular}{ccccccccccc}
\toprule
        &   & \multicolumn{4}{c}{$p=5$}  && \multicolumn{4}{c}{$p=7$}\\
        \cmidrule{3-6} \cmidrule{8-11}
$n$ & $\delta^2$ &\text{REZIB} & \text{JSZIB}& \text{PJSZIB} & \text{PTZIB} &&  \text{REZIB} & \text{JSZIB}& \text{PJSZIB} & \text{PTZIB}\\
   \hline
50 & 0.0 &1.9828 &1.1137 &1.1491 &1.9741 && 2.3795 &1.1636 &1.2316 &2.1828\\
    & 0.2 &1.4692 &1.1136 &1.1248 &1.7600 && 2.1757 &1.1531 &1.2205 &1.9703\\
    & 0.4 &1.2571 &1.0998 &1.1217 &1.3483 && 1.9734 &1.1265 &1.1636 &1.7591\\
    & 0.6 &1.0955 &1.0188 &1.0492 &1.0976 && 1.3715 &1.1014 &1.1479 &1.5483\\
    & 0.8 &0.9342 &1.0041 &1.0371 &1.0972 && 1.0697 &1.0371 &1.1331 &1.1381\\
    & 1.0 &0.7233 &1.0015 &1.0353 &1.0788 && 0.8680 &1.0027 &1.0389 &1.1191\\
    & 1.2 &0.6126 &1.0007 &1.0337 &1.0615 && 0.7663 &1.0018 &1.0352 &1.1005\\
    & 1.4 &0.5022 &0.9999 &1.0324 &1.0452 && 0.5247 &1.0003 &1.0349 &1.0742\\
    & 1.6 &0.3921 &0.9990 &1.0312 &1.0198 && 0.4632 &0.9999 &1.0319 &1.0252\\
    & 1.8 &0.2822 &0.9980 &1.0301 &1.0111 && 0.3617 &0.9997 &1.0304 &1.0184\\
    & 2.0 &0.1725 &0.9970 &1.0290 &1.0007 && 0.2602 &0.9983 &1.0293 &1.0134  \\
[5pt]
100 & 0.0 &2.7034 &1.2658 &1.2806 &2.5968 && 2.9205 &1.4302 &1.4686 &2.8064\\
    & 0.2 &1.9821 &1.1936 &1.2105 &1.8196 && 2.3225 &1.2231 &1.2617 &2.2060\\
    & 0.4 &1.4700 &1.1469 &1.1741 &1.4135 && 1.9905 &1.1613 &1.1980 &1.9352\\
    & 0.6 &1.1590 &1.1225 &1.1559 &1.1957 && 1.3899 &1.1438 &1.1614 &1.7965\\
    & 0.8 &0.9534 &1.1198 &1.1428 &1.1310 && 1.1562 &1.1265 &1.0674 &1.7409\\
    & 1.0 &0.8083 &1.1169 &1.1331 &1.0963 && 0.9609 &1.1234 &1.0453 &1.5400\\
    & 1.2 &0.7008 &1.1125 &1.1166 &1.0937 && 0.8896 &1.1196 &1.0297 &1.3241\\
    & 1.4 &0.6180 &1.1010 &1.1029 &1.0924 && 0.7343 &1.1061 &1.0196 &1.1136\\
    & 1.6 &0.5524 &1.0198 &1.0104 &1.0974 && 0.5901 &1.0102 &1.0119 &1.1056\\
    & 1.8 &0.4993 &0.9987 &0.9994 &1.0204 && 0.5541 &0.9995 &1.0060 &1.0938\\
    & 2.0 &0.4553 &0.9985 &1.0006 &1.0191 && 0.5241 &0.9986 &1.0020 &1.0342\\
    [5pt]
200 & 0.0 &3.5308 &1.9577 &2.1802 &3.4737 && 3.5993 &2.3020 &2.3553 &3.5208\\
    & 0.2 &2.9998 &1.5052 &1.7188 &2.7154 && 3.1916 &1.9996 &2.0213 &3.0951\\
    & 0.4 &1.4455 &1.3861 &1.5885 &1.6304 && 2.4374 &1.6329 &1.7513 &2.5840\\
    & 0.6 &1.3260 &1.1873 &1.2867 &1.2592 && 1.8195 &1.3195 &1.4332 &1.9057\\
    & 0.8 &1.2568 &1.1875 &1.1883 &1.2137 && 1.3515 &1.2101 &1.2292 &1.6673\\
    & 1.0 &0.9118 &1.1588 &1.1699 &1.1796 && 1.1074 &1.1972 &1.1913 &1.4301\\
    & 1.2 &0.7801 &1.1211 &1.1413 &1.1194 && 0.9764 &1.1277 &1.1454 &1.1732\\
    & 1.4 &0.6567 &1.1021 &1.1331 &1.1041 && 0.8534 &1.1067 &1.1398 &1.1080\\
    & 1.6 &0.6386 &1.0398 &1.0721 &1.0987 && 0.7358 &1.0441 &1.0994 &1.0990\\
    & 1.8 &0.5243 &1.0002 &1.0002 &1.0217 && 0.6217 &1.0002 &1.0072 &1.0981\\
    & 2.0 &0.5126 &0.9993 &0.9998 &1.0002 && 0.5383 &0.9554 &1.0054 &1.0012\\
\bottomrule
\end{tabular}
\end{table}

\section{Application to a Real Dataset}\label{sec6}
In this section, we apply the proposed estimators in ZIBell models to analyze a real dataset, specifically the wildlife fish dataset \cite{Saffari}. This dataset comprises 250 observations with 5 covariates: $X_1$ represents whether the trip was not just for fishing (nofish), 0 if no 1 and if yes,
$X_2$ represents whether live bait was used or not (livebait), 0 if no and 1 if yes, $X_3$ represents whether or not they brought a camper (camper), $X_4$ represents how many total persons on the trip (persons) and $X_5$ represents how many children present (child).  The response variable $(y)$ represents the number of fish caught. Algamal et al. \cite{Algamal2} have demonstrated that the ZIBell regression model is suitable for this dataset. Following this conclusion, we consider the following full model for this dataset:
\begin{align}
\log (\mu_i)&= \beta_0+ \beta_1 X_{i1} + \beta_2 X_{i2}+\beta_3 X_{i3}+ \beta_4 X_{i4}+ \beta_5 X_{i5} \qquad i=1, 2, \dots, 250\\
\log (\dfrac{\pi_i}{1- \pi_i})&= \gamma_0+ \beta_1 X_{i1} + \gamma_2 X_{i2}+ \gamma_3 X_{i3}+ \gamma_4 X_{i4}+ \gamma_5 X_{i5}
\end{align}
We fit this model and results show that for $\mu_i$, only $X_1$ and $X_2$ and for $\pi_i$ all covariates are significant. Thus, we use this information to conduct our sub-model. The restrictions on the model will be:
\begin{align*}
\beta_j=0,  \quad j= 3, 4, 5
\end{align*}
The estimates by using different estimators are reported in Table \ref{tab2}. To compare the estimators, we apply the following formulas to find the MSE and the mean absolute error for any estimators:
\begin{align*}
MSE(\hat{\boldsymbol{\theta}})&= \dfrac{1}{250} \sum_{i=1}^{250} \big(y_i- (1-\hat{\pi}_i)\hat{\mu}_i\big)\\
MAE(\hat{\boldsymbol{\theta}})&= \dfrac{1}{250} \sum_{i=1}^{250} \big\vert y_i- (1-\hat{\pi}_i)\hat{\mu}_i\big\vert
\end{align*}
where $ \hat{\boldsymbol{\theta}} $ can be any proposed estimators and $ \hat{\pi}_i $ and $\hat{\mu}_i$ are obtained  using \eqref{mui} and \eqref{pii}, respectively.
Based on the table, the proposed shrinkage estimators outperform the UNZIB estimator in terms of MSE and MAE. However, the REZIB and PTZIB perform better than the Stein-type estimators.

\begin{table}[h]
    \caption{Estimates, MSE and MAE by proposed estimators for wildlife fish dataset.}\label{tab2}
\begin{tabular}{cccccc}
\toprule
\text{Parameters} & \text{UNZIB} &\text{REZIB} & \text{JSZIB}& \text{PJSZIB} & \text{PTZIB} \\
\hline
$\beta_0 $ &-1.7939 &-1.7856& -1.7936& -1.7936& -1.7939\\
$\beta_1$  &-0.7437& -0.7355& -0.7434& -0.7434& -0.7437\\
$\beta_2$  &1.1757&  1.1672&  1.1754&  1.1754&  1.1757\\
$\beta_3$  &0.4642&  0.5278&  0.4666&  0.4666&  0.4642\\
$\beta_4$  &0.8797 & 0.8727&  0.8795&  0.8795&  0.8797\\
$\beta_5$  &-1.1753& -1.3278& -1.1809& -1.1809& -1.1753\\[5pt]
$\gamma_0$  &1.5709 &-0.4733  &1.4960  &1.4960  &1.5709\\
$\gamma_1$  &-1.6335 &-0.7970 &-1.6028& -1.6028& -1.6335\\
$ \gamma_2$ &-0.3437  &0.2394& -0.3224& -0.3224& -0.3437\\
$\gamma_3 $ &-1.0164 & 0.0000& -0.9792& -0.9792& -1.0164\\
$\gamma_4$  &-0.8286  &0.0000& -0.7982& -0.7982& -0.8286\\
$\gamma_5$  &2.0524  &0.0000&  1.9771&  1.9771&  2.0524\\
\bottomrule
\text{MSE}& 129.6031 & 126.1733 & 128.4942 &128.4942 &126.1733\\
\text{MAE}& 4.8105 & 2.7865 & 3.8096 & 3.8096 & 2.7865 \\
\bottomrule
\end{tabular}
\end{table}

\section{Conclusion}\label{sec7}
In this paper, we propose the Stein-type shrinkage and preliminary test estimators by combining the UNZIB and REZIB estimators in ZIBell regression models based on some information on models parameters. The asymptotic distributional bias and mean squared error of the suggested estimators are obtained.
A simulation study is conducted, and real-world data are analyzed to show the performance of our proposed shrinkage estimators.
Our findings reveal that the proposed shrinkage estimator leads to a considerable reduction in mean squared error when the information on parameters is well-defined.

\begin{appendices}

\section{Proof of Lemma \ref{lem1}}\label{plem1}
according to \eqref{distun}, the mean and covariance matrix of $U_1$ is clear. Now, due to the definition of $U_2$ and \eqref{RES}, we have:
\begin{align}\label{rezib}
U_2 & = \sqrt{n}(\hat{\boldsymbol{\theta}}_{REZIB}-\boldsymbol{\theta}) \nonumber\\
&= \sqrt{n}\bigg(\hat{\boldsymbol{\theta}}_{UNZIB} - \boldsymbol{F}^{-1}  \boldsymbol{R}^T \big[\boldsymbol{R} \boldsymbol{F}^{-1}\boldsymbol{R}^T\big]^{-1} (\boldsymbol{R}\hat{\boldsymbol{\theta}}_{UNZIB}- \boldsymbol{r})-\boldsymbol{\theta}\bigg) \nonumber\\
&= \sqrt{n} \big(\hat{\boldsymbol{\theta}}_{UNZIB}-\boldsymbol{\theta}\big) -\sqrt{n} \boldsymbol{J} \big(\hat{\boldsymbol{\theta}}_{UNZIB}-\boldsymbol{\theta}\big) -\boldsymbol{J} \sqrt{n} \big(\boldsymbol{R}\boldsymbol{\theta} -\boldsymbol{r}\big) \nonumber\\
& = U_1 - \boldsymbol{J}U_1-\boldsymbol{J}\boldsymbol{\vartheta}
\end{align}
Thus, the mean and covariance matrix of $U_2$ will be:
\begin{align}
\mathbb{E}(U_2)&= \mathbb{E}\bigg(U_1 - \boldsymbol{J}U_1-\boldsymbol{J}\boldsymbol{\vartheta}\bigg)= -\boldsymbol{J}\boldsymbol{\vartheta} \\
cov(U_2)&= cov(U_1 - \boldsymbol{J}U_1+\boldsymbol{J}\boldsymbol{\vartheta})\nonumber\\
&= cov(U_1)+\boldsymbol{J}cov(U_1)\boldsymbol{J}^T-2cov(U_1, \boldsymbol{J}U_1)\nonumber\\
&= \boldsymbol{F}^{-1}+\boldsymbol{J} \boldsymbol{F}^{-1} \boldsymbol{J}^T-2 \boldsymbol{F}^{-1}\boldsymbol{J}^T\nonumber\\
&=  \boldsymbol{F}^{-1}-\boldsymbol{J} \boldsymbol{F}^{-1}
\end{align}
We can rewrite $U_3$ by using $U1$ and $U_2$ as:
\begin{align}
U_3 & =  \sqrt{n}(\hat{\boldsymbol{\theta}}_{UNZIB}-\hat{\boldsymbol{\theta}}_{REZIB})  \nonumber\\
&= \sqrt{n}\bigg(\big(\hat{\boldsymbol{\theta}}_{UNZIB}- \boldsymbol{\theta} \bigg)-\big(\hat{\boldsymbol{\theta}}_{REZIB}-\boldsymbol{\theta}\big) \bigg)  \nonumber\\
& = U_1-(U_1 - \boldsymbol{J}U_1-\boldsymbol{J}\boldsymbol{\vartheta})\nonumber\\
& = \boldsymbol{J}U_1+\boldsymbol{J}\boldsymbol{\vartheta}
\end{align}
Hence, the mean and covariance matrix of $U_3$ will be:
\begin{align}
\mathbb{E}(U_3)&= \mathbb{E}\bigg(\boldsymbol{J}U_1+\boldsymbol{J}\boldsymbol{\vartheta}\bigg)= \boldsymbol{J}\boldsymbol{\vartheta} \\
cov(U_3)&= cov(\boldsymbol{J}U_1+\boldsymbol{J}\boldsymbol{\vartheta})= \boldsymbol{J}cov(U_1)\boldsymbol{J}^T= \boldsymbol{J} \boldsymbol{F}^{-1}
\end{align}
For covariance of pairwise of variables $U_1$, $U_2$ and $U_3$, we have:
\begin{align}
cov(U_1, U_2) & =  cov(U_1, \boldsymbol{J}U_1+\boldsymbol{J}\boldsymbol{\vartheta})  \nonumber\\
&= cov(U_1)- \boldsymbol{J} cov(U_1) =  \boldsymbol{F}^{-1}-\boldsymbol{J} \boldsymbol{F}^{-1}
\end{align}
and
\begin{align}
cov(U_1, U_3) & =  cov(U_1, \boldsymbol{J}U_1+\boldsymbol{J}\boldsymbol{\vartheta})  \nonumber\\
&= \boldsymbol{J} cov(U_1) =\boldsymbol{J} \boldsymbol{F}^{-1}
\end{align}
and
\begin{align}
cov(U_2, U_3) & =  cov(\boldsymbol{J}U_1-\boldsymbol{J}\boldsymbol{\vartheta}, \boldsymbol{J}U_1+\boldsymbol{J}\boldsymbol{\vartheta})  \nonumber\\
&= \boldsymbol{J} cov(U_1) \boldsymbol{J}^T = \boldsymbol{J} \boldsymbol{F}^{-1} \boldsymbol{J}^T =  \boldsymbol{J} \boldsymbol{F}^{-1}
\end{align}

\vspace{10pt}
\section{Proof of Theorem \ref{thm1}}\label{proof1}
By using Lemma \ref{lem1}, we have:
\begin{equation}
ADB(\hat{\pmb{\beta}}^{REZIB})= \mathbb{E}\big(\sqrt{n}(\hat{\boldsymbol{\theta}}_{REZIB}-\boldsymbol{\theta})\big)= \mathbb{E}(U_2)= -  \boldsymbol{J \vartheta}.
\end{equation}
and
\begin{align}
ADB\big(\hat{\boldsymbol{\theta}}_{JSZIB}\big)& =\mathbb{E}\big(\sqrt{n}(\hat{\boldsymbol{\theta}}_{JSZIB}-\boldsymbol{\theta})\big) \nonumber \\
&= \mathbb{E} \bigg[\sqrt{n} \bigg\{(\hat{\boldsymbol{\theta}}_{REZIB}- \boldsymbol{\theta})+ \big(1-\dfrac{p_2-2}{T_n}\big)\big(\hat{\boldsymbol{\theta}}_{UNZIB}-\hat{\boldsymbol{\theta}}_{REZIB} \big) \bigg\} \bigg] \nonumber\\
& =\mathbb{E} \bigg[ U_2 + \big(1-\dfrac{p_2-2}{T_n}\big) U_3 \bigg] \nonumber \\
& = \mathbb{E} \big(U_2  \big) + \mathbb{E}\bigg[ \big(1-\dfrac{p_2-2}{T_n}\big) U_3 \bigg] \nonumber
\end{align}
By using Lemma \eqref{lem1} to Lemma \eqref{lem3}, we have:
\begin{align}
\mathbb{E}\big(\sqrt{n}(\hat{\boldsymbol{\theta}}_{JSZIB}-\boldsymbol{\theta})\big)  &= -\boldsymbol{J \vartheta} + \boldsymbol{J \vartheta} \mathbb{E}\bigg[ 1-\dfrac{p_2-2}{\chi^{2}_{p_2+2} (\lambda)} \bigg]  \nonumber \\
& = -(p_2-2) \boldsymbol{J \vartheta}  \mathbb{E}\bigg[ \chi^{-2}_{p_2+2} (\lambda) \bigg]
\end{align}
For $ADB \big(\hat{\boldsymbol{\theta}}_{PJSZIB}\big)$, we derive it as:
\begin{align*}
ADB \big(\hat{\boldsymbol{\theta}}_{PJSZIB}\big)& =\mathbb{E}\big(\sqrt{n}(\hat{\boldsymbol{\theta}}_{PJSZIB}-\boldsymbol{\theta})\big)\\
&= \mathbb{E} \bigg[\sqrt{n} \bigg\{\hat{\boldsymbol{\theta}}_{REZIB}+ \bigg(1- \dfrac{p_2-2}{T_n}\bigg)^{+}
(\hat{\boldsymbol{\theta}}_{REZIB}-\hat{\boldsymbol{\theta}}_{UNZIB})\big)- \boldsymbol{\theta} \bigg\} \bigg]
\end{align*}

\begin{align}
ADB \big(\hat{\boldsymbol{\theta}}_{PJSZIB}\big)& = \mathbb{E} \bigg[\sqrt{n} \bigg\{(\hat{\boldsymbol{\theta}}_{JSZIB}- \boldsymbol{\theta})-\bigg(1- \dfrac{p_2-2}{T_n}\bigg) (\hat{\boldsymbol{\theta}}_{REZIB}-\hat{\boldsymbol{\theta}}_{UNZIB})\big) I_{(T_n < p_2-2)} \bigg]\nonumber\\
& =  \mathbb{E} \bigg[\sqrt{n} (\hat{\boldsymbol{\theta}}_{JSZIB}- \boldsymbol{\theta})- \sqrt{n}\big(\hat{\boldsymbol{\theta}}_{UNZIB}-\hat{\boldsymbol{\theta}}_{REZIB} \big)\bigg(1- \dfrac{p_2-2}{T_n}\bigg) I_{(T_n< p_2-2)} \bigg] \nonumber\\
& =\mathbb{E} \bigg[\sqrt{n} (\hat{\boldsymbol{\theta}}_{JSZIB}- \boldsymbol{\theta})- U_3 I_{(T_n< p_2-2)}+(p_2-2) U_3 T_n^{-1}I_{(T_n< p_2-2)}\bigg] \nonumber\\
& = ADB \big(\hat{\boldsymbol{\theta}}_{JSZIB}\big) - \mathbb{E}\bigg[U_3 I_{(T_n< p_2-2)} \bigg] + (p_2-2) \mathbb{E}\bigg[U_3 T_n^{-1}I_{(T_n< p_2-2)} \bigg] \nonumber \\
& = ADB \big(\hat{\boldsymbol{\theta}}_{JSZIB}\big) - \mathbb{E} (U_3) \mathbb{E}\bigg[I_{(T_n< p_2-2)}\bigg] +(p_2-2) \mathbb{E} (U_3) \mathbb{E}\bigg[T_n^{-1}I_{(T_n< p_2-2)} \bigg]\nonumber \\
& =  ADB \big(\hat{\boldsymbol{\theta}}_{JSZIB}\big) - \boldsymbol{J \vartheta}  \Phi_{p_2+2} (p_2-2; \lambda) +(p_2-2)\boldsymbol{J \vartheta} \mathbb{E}\bigg[\chi^{-2}_{p_2+2} (\lambda)I_{(\chi^{-2}_{p_2+2} (\lambda)< p_2-2)} \bigg]\nonumber \\
& = -(p_2-2) \boldsymbol{J \vartheta}\bigg\{\mathbb{E}\bigg[ \chi^{-2}_{p_2+2} (\lambda) \bigg]-\mathbb{E}\bigg[\chi^{-2}_{p_2+2} (\lambda)I_{(\chi^{-2}_{p_2+2} (\lambda)< p_2-2)} \bigg]\bigg\}- \boldsymbol{J \vartheta}  \Phi_{p_2+2} (p_2-2; \lambda) \nonumber.
\end{align}
Finally, for the ADB of $\hat{\boldsymbol{\theta}}_{PTZIB}$, we have:
\begin{align}
ADB \big(\hat{\boldsymbol{\theta}}_{PTZIB}\big)& =\mathbb{E}\big(\sqrt{n}(\hat{\boldsymbol{\theta}}_{PTZIB}-\boldsymbol{\theta})\big) \nonumber \\
& = \mathbb{E} \bigg[\sqrt{n} \bigg\{(\hat{\boldsymbol{\theta}}_{UNZIB}- \boldsymbol{\theta}) - (\hat{\boldsymbol{\theta}}_{UNZIB}
-\hat{\boldsymbol{\theta}}_{REZIP}) I_{(T_n \leq T_{(p_2, \alpha)})}\bigg\} \bigg] \nonumber\\
& =\mathbb{E} \bigg[U_1- U_3 I_{(T_n \leq T_{(p_2, \alpha)})} \bigg] \nonumber\\
& = \mathbb{E}  (U_1)- \mathbb{E} \bigg[U_3 I_{(T_n \leq T_{(p_2, \alpha)})} \bigg] \nonumber\\
& =- \mathbb{E}  (U_3)\mathbb{E} \bigg[I_{(T_n \leq T_{(p_2, \alpha)})} \bigg] \nonumber\\
& = - \boldsymbol{J\vartheta} \Phi_{p_2+2}(T_{(p_2, \alpha)}; \lambda)\nonumber
\end{align}

\vspace{10pt}
\section{Proof of Theorem \ref{thm2}}\label{proof2}
By definition of the ADMSE, we have
\begin{align}
ADMSE(\hat{\boldsymbol{\theta}}_{UNZIB}) &=\mathbb{E}\bigg[ \sqrt{n}(\hat{\boldsymbol{\theta}}_{UNZIB}-\boldsymbol{\theta})\sqrt{n}(\hat{\boldsymbol{\theta}}_{UNZIB}-\boldsymbol{\theta})^T \bigg] =  \mathbb{E}\bigg[ U_1 U_1^T \bigg] =   \boldsymbol{F}^{-1} \nonumber
\end{align}
and
\begin{align}
ADMSE(\hat{\boldsymbol{\theta}}_{REZIB}) &=\mathbb{E}\bigg[ \sqrt{n}(\hat{\boldsymbol{\theta}}_{REZIB}-\boldsymbol{\theta})\sqrt{n}(\hat{\boldsymbol{\theta}}_{REZIB}-\boldsymbol{\theta})^T \bigg]=\mathbb{E}\bigg[ U_2 U_2^T \bigg] \nonumber\\
& =  cov(U_2)+  \bigg[\mathbb{E}(U_2)\big]\big[\mathbb{E}(U_2)\bigg]^T\nonumber \\
& =  \boldsymbol{F}^{-1}- \boldsymbol{J} \boldsymbol{F}^{-1} +  \boldsymbol{W}
\end{align}
For ADMSE of JSZIB, we have:
\begin{align}  \label{e1}
ADMSE(\hat{\boldsymbol{\theta}}_{JSZIB})&=\mathbb{E}\bigg[ \sqrt{n}(\hat{\boldsymbol{\theta}}_{JSZIB}-\boldsymbol{\theta})\sqrt{n}(\hat{\boldsymbol{\theta}}_{JSZIB}-\boldsymbol{\theta})^T \bigg] \nonumber\\
& =  \mathbb{E}\bigg[ \bigg(U_2+ \big(1-\dfrac{p_2-2}{T_n}\big) U_3\bigg) \bigg(U_2+ \big(1-\dfrac{p_2-2}{T_n}\big) U_3\bigg)^T\bigg] \nonumber\\
& =  \mathbb{E}\bigg[U_2 U_2^T\bigg] + \mathbb{E}\bigg[U_2  \big(1-\dfrac{p_2-2}{T_n}\big) U_3^T\bigg]+ \mathbb{E}\bigg[U_2^T  \{1-\dfrac{p_2-2}{T_n}\} U_3\bigg] + \mathbb{E}\bigg[\big(1-\dfrac{p_2-2}{T_n}\big)^2 U_3U_3^T\bigg]\nonumber\\
& = cov(U_2)+  \bigg[\mathbb{E}(U_2)\bigg]\bigg[\mathbb{E}(U_2)\bigg]^T + 2 \underbrace{\mathbb{E}\bigg[U_2  \big(1-\dfrac{p_2-2}{T_n}\big) U_3^T\bigg]}_{W_1} + \underbrace{\mathbb{E}\bigg[\big(1-\dfrac{p_2-2}{T_n}\big)^2 U_3U_3^T\bigg]}_{W_2}
\end{align}
Due to Lemmas, we can write:
\begin{align}\label{m1}
W_1 & =  \mathbb{E}\bigg[U_2  \big(1-\dfrac{p_2-2}{T_n}\big) U_3^T\bigg] \nonumber\\
& =\mathbb{E}\bigg[ \mathbb{E}\bigg[U_2  \big(1-\dfrac{p_2-2}{T_n}\big) U_3^T| U_3\bigg]\bigg] \nonumber\\
& =\mathbb{E}\bigg[ \mathbb{E}\bigg[U_2 | U_3\bigg] \big(1-\dfrac{p_2-2}{T_n}\big) U_3^T\bigg] \nonumber\\
& =\mathbb{E}\bigg[ \bigg\{ \mathbb{E}(U_2)+ cov(U_2, U_3) [cov(U_3)]^{-1} \big( U_3-\mathbb{E}(U_2) \big)\bigg\} \big(1-\dfrac{p_2-2}{T_n}\big) U_3^T\bigg] \nonumber\\
& = \mathbb{E}\bigg[-\boldsymbol{J\vartheta} \big(1-\dfrac{p_2-2}{T_n}\big) U_3^T\bigg] \nonumber\\
& =  -\boldsymbol{J\vartheta} \mathbb{E}\bigg[\big(1-\dfrac{p_2-2}{T_n}\big) U_3^T\bigg] \nonumber\\
& =  -\boldsymbol{J\vartheta} \mathbb{E} (U_3^T) \mathbb{E} \bigg[1- (p_2-2)  \chi^{-2}_{p_2+2} (\lambda)\bigg] \nonumber\\
& =  -\boldsymbol{W} \mathbb{E} \bigg[1- (p_2-2) \chi^{-2}_{p_2+2} (\lambda)\bigg] \nonumber\\
& =  -\boldsymbol{W}+ (p_2-2)\boldsymbol{W} \mathbb{E} \bigg[\chi^{-2}_{p_2+2} (\lambda)\bigg]
\end{align}
The same way for $W_2$, we have:
 \begin{align} \label{m2}
W_2 & = \mathbb{E}\bigg[\big(1-\dfrac{p_2-2}{T_n}\big)^2 U_3U_3^T\bigg] \nonumber\\
& = cov(U_3) \mathbb{E}\bigg[\big\{1-(p_2-2) \chi^{-2}_{p_2+2} (\lambda) \big\}^2\bigg]+ \bigg[\mathbb{E}(U_3)\bigg] \bigg[\mathbb{E}(U_3)\bigg]^T \mathbb{E}\bigg[\big\{1-(p_2-2) \chi^{-2}_{p_2+4} (\lambda) \big\}^2\bigg] \nonumber\\
& = \boldsymbol{J}\boldsymbol{F}^{-1} \mathbb{E}\bigg[\big\{1- (p_2-2)\chi^{-2}_{p_2+2} (\lambda)\big\}^2\bigg]+ \boldsymbol{W} \mathbb{E}\bigg[ \{1-(p_2-2)\chi^{-2}_{p_2+4} (\lambda) \}^2\bigg]
\end{align}
Thus,
\begin{align}
ADMSE(\hat{\boldsymbol{\theta}}_{JSZIB}) &=  \boldsymbol{F}^{-1}- \boldsymbol{J} \boldsymbol{F}^{-1} +  \boldsymbol{W}
 -2\boldsymbol{W} +2 (p_2-2) \boldsymbol{W} \mathbb{E} \bigg[\chi^{-2}_{p_2+2} (\lambda)\bigg] \nonumber \\
& \qquad + \boldsymbol{J}\boldsymbol{F}^{-1} \mathbb{E}\bigg[ \{1-(p_2-2)\chi^{-2}_{p_2+2} (\lambda) \}^2\bigg]+ \boldsymbol{W} \mathbb{E}\bigg[ \{1-(p_2-2) \chi^{-2}_{p_2+4} (\lambda) \}^2\bigg] \nonumber \\
&= \boldsymbol{F}^{-1}+ 2 (p_2-2) \boldsymbol{W} \bigg\{ \mathbb{E}\bigg[ \chi^{-2}_{p_2+2} (\lambda)\bigg]-\mathbb{E}\bigg[ \chi^{-2}_{p_2+4} (\lambda) \bigg] +\dfrac{p_2-2}{2}\mathbb{E}\bigg[ \chi^{-4}_{p_2+4} (\lambda) \bigg] \bigg\}\nonumber\\
& \qquad -2 (p_2-2)\boldsymbol{J}\boldsymbol{F}^{-1} \bigg\{ \mathbb{E}\bigg[ \chi^{-2}_{p_2+4} (\lambda) \bigg] -\dfrac{p_2-2}{2}\mathbb{E}\bigg[\chi^{-4}_{p_2+2} (\lambda)\bigg]\bigg\}
\end{align}
For the ADMSE of PJSZIB, we have:
\begin{align}\label{e4}
ADMSE(&\hat{\boldsymbol{\theta}}_{PJSZIB}) =\mathbb{E}\bigg[ \sqrt{n}(\hat{\boldsymbol{\theta}}_{PJSZIB}-\boldsymbol{\theta})\sqrt{n}(\hat{\boldsymbol{\theta}}_{PJSZIB}-\boldsymbol{\theta})^T \bigg] \nonumber\\
& = \mathbb{E}\bigg[ \sqrt{n}(\hat{\boldsymbol{\theta}}_{JSZIB}- \boldsymbol{\theta})\sqrt{n}(\hat{\boldsymbol{\theta}}_{JSZIB}- \boldsymbol{\theta})^T\bigg] - \mathbb{E}\bigg[\sqrt{n}(\hat{\boldsymbol{\theta}}_{JSZIB}- \boldsymbol{\theta})\big(1-\dfrac{p_2-2}{T_n}\big) U_3^T I_{(T_n<p_2-2)}\bigg]  \nonumber\\
& \qquad - \mathbb{E}\bigg[\sqrt{n}(\hat{\boldsymbol{\theta}}_{JSZIB}- \boldsymbol{\theta})^T \big(1-\dfrac{p_2-2}{T_n}\big)U_3 I_{(T_n<p_2-2)}\bigg]+ \mathbb{E}\bigg[\big(1-\dfrac{p_2-2}{T_n}\big)^2U_3 U_3^T  I_{(T_n<p_2-2)}\bigg] \nonumber\\
& = ADMSE(\hat{\boldsymbol{\theta}}_{JSZIB}) - 2 \underbrace{ \mathbb{E}\bigg[\sqrt{n}(\hat{\boldsymbol{\theta}}_{JSZIB}- \boldsymbol{\theta}) \big(1-\dfrac{p_2-2}{T_n}\big) U_3^T I_{(T_n<p_2-2)}\bigg]}_{W_3} \nonumber\\
& \qquad +\underbrace{\mathbb{E}\bigg[\big(1-\dfrac{p_2-2}{T_n}\big)^2U_3 U_3^T  I_{(T_n< p_2-2)}\bigg]}_{W_4}
\end{align}
where
\begin{align}\label{m3}
W_3 &=  \mathbb{E}\bigg[\sqrt{n}(\hat{\boldsymbol{\theta}}_{JSZIB}- \boldsymbol{\theta}) \big(1-\dfrac{p_2-2}{T_n}\big) U_3^T I_{(T_n< p_2-2)}\bigg] \nonumber \\
& = \mathbb{E}\bigg[\bigg(U_2+ \big(1- \dfrac{p_2-2}{T_n}\big) U_3\bigg) U_3^T\big(1- \dfrac{p_2-2}{T_n}\big) I_{(T_n<p_2-2)}\bigg] \nonumber\\
& =\mathbb{E}\bigg[U_2 U_3^T\big(1- \dfrac{p_2-2}{T_n}\big) I_{(T_n< p_2-2)}\bigg] + \mathbb{E}\bigg[ U_3 U_3^T\big(1- \dfrac{p_2-2}{T_n}\big)^2 I_{(T_n< p_2-2)}\bigg]\nonumber\\
& = \mathbb{E}(U_2)\mathbb{E}\bigg[ U_3^T  I_{\big(\chi^{2}_{p_2+2} (\lambda)< p_2-2\big)}\bigg] \nonumber \\
& \qquad+ cov(U_2, U_3)[cov(U_3)]^{-1}\mathbb{E}\bigg[ (U_3-\mathbb{E}(U_3))\big(1- \dfrac{p_2-2}{T_n}\big)  I_{\big(\chi^{2}_{p_2+2} (\lambda)< p_2-2\big)}\bigg]\nonumber \\
& \qquad + cov(U_3)\mathbb{E}\bigg[\bigg(1- (p_2-2)\chi^{-2}_{p_2+2} (\lambda)\bigg)^2 I_{\big(\chi^{2}_{p_2+2} (\lambda)< p_2-2\big)}\bigg]\nonumber \\
& \qquad + \mathbb{E}(U_3) \mathbb{E}(U_3^T) \mathbb{E}\bigg[\bigg(1- (p_2-2)\chi^{-2}_{p_2+4} (\lambda)\bigg)^2 I_{\big(\chi^{2}_{p_2+4} (\lambda)< p_2-2\big)}\bigg]\nonumber\\
&= \boldsymbol{W} \Phi_{p_2+2}(T_{(p_2, \alpha)}; \lambda)+ \boldsymbol{J}\boldsymbol{F}^{-1}\mathbb{E}\bigg[\bigg(1- (p_2-2)\chi^{-2}_{p_2+2} (\lambda)\bigg)^2 I_{\big(\chi^{2}_{p_2+2} (\lambda)< p_2-2\big)}\bigg]\nonumber \\
& \qquad + \boldsymbol{W} \mathbb{E}\bigg[\bigg(1- (p_2-2)\chi^{-2}_{p_2+4} (\lambda)\bigg)^2 I_{\big(\chi^{2}_{p_2+4} (\lambda)< p_2-2\big)}\bigg]
\end{align}
and
\begin{align} \label{m4}
W_4 &=  \mathbb{E}\bigg[\big(1-\dfrac{p_2-2}{T_n}\big)^2U_3 U_3^T  I_{(T_n< c)}\bigg]\nonumber \\
&= cov(U_3) \mathbb{E}\bigg[\bigg(1- (p_2-2)\chi^{-2}_{p_2+4} (\lambda)\bigg)^2  I_{\big(\chi^{2}_{p_2+4} (\lambda)< p_2-2\big)} \bigg]\nonumber \\
& \qquad + \mathbb{E}(U_3) \mathbb{E}(U_3^T) \mathbb{E}\bigg[\bigg(1- (p_2-2)\chi^{-2}_{p_2+2} (\lambda)\bigg)^2  I_{\big(\chi^{2}_{p_2+2} (\lambda)< p_2-2\big)} \bigg]\nonumber \\
&=\boldsymbol{J}\boldsymbol{F}^{-1} \mathbb{E}\bigg[\bigg(1- (p_2-2)\chi^{-2}_{p_2+2} (\lambda)\bigg)^2  I_{\big(\chi^{2}_{p_2+2} (\lambda)< p_2-2\big)} \bigg] + \boldsymbol{W}\mathbb{E}\bigg[\bigg(1- (p_2-2)\chi^{-2}_{p_2+4} (\lambda)\bigg)^2  I_{\big(\chi^{2}_{p_2+4} (\lambda)< p_2-2\big)} \bigg]
\end{align}
By replacing \eqref{m3} and \eqref{m4} in \eqref{e4}, we will have:
\begin{align}\label{e5}
ADMSE(\hat{\boldsymbol{\theta}}_{PJSZIB}) &=ADMSE(\hat{\boldsymbol{\theta}}_{JSZIB}) - 2 \boldsymbol{W} \Phi_{p_2+2}(T_{(p_2, \alpha)}; \lambda)\nonumber \\
& \qquad-2\boldsymbol{J}\boldsymbol{F}^{-1}\mathbb{E}\bigg[\bigg(1- (p_2-2)\chi^{-2}_{q+2} (\lambda)\bigg)^2 I_{\big(\chi^{2}_{p_2+2} (\lambda)< p_2-2\big)}\bigg]\nonumber \\
& \qquad -2 \boldsymbol{W} \mathbb{E}\bigg[\bigg(1- (p_2-2)\chi^{-2}_{p_2+4} (\lambda)\bigg)^2 I_{\big(\chi^{2}_{p_2+4} (\lambda)< p_2-2\big)}\bigg]\nonumber \\
& \qquad +\boldsymbol{J}\boldsymbol{F}^{-1} \mathbb{E}\bigg[\bigg(1- (p_2-2)\chi^{-2}_{p_2+2} (\lambda)\bigg)^2  I_{\big(\chi^{2}_{p_2+2} (\lambda)< p_2-2\big)} \bigg]\nonumber \\
& \qquad + \boldsymbol{W}\mathbb{E}\bigg[\bigg(1- (p_2-2)\chi^{-2}_{p_2+4} (\lambda)\bigg)^2  I_{\big(\chi^{2}_{p_2+4} (\lambda)< p_2-2\big)} \bigg]\nonumber\\
&=ADMSE(\hat{\boldsymbol{\theta}}_{JSZIB}) - 2 \boldsymbol{W} \Phi_{p_2+2}(T_{(p_2, \alpha)}; \lambda)\nonumber \\
& \qquad-\boldsymbol{J}\boldsymbol{F}^{-1}\mathbb{E}\bigg[\bigg(1- (p_2-2)\chi^{-2}_{p_2+2} (\lambda)\bigg)^2 I_{\big(\chi^{2}_{p_2+2} (\lambda)< p_2-2\big)}\bigg]\nonumber \\
& \qquad - \boldsymbol{W} \mathbb{E}\bigg[\bigg(1- (p_2-2)\chi^{-2}_{p_2+4} (\lambda)\bigg)^2 I_{\big(\chi^{2}_{p_2+4} (\lambda)< p_2-2\big)}\bigg]
\end{align}
For $\hat{\boldsymbol{\theta}}_{PTZIB}$, we have:
\begin{align}
ADMSE(\hat{\boldsymbol{\theta}}_{PJSZIB}) &=\mathbb{E}\bigg[ \sqrt{n}(\hat{\boldsymbol{\theta}}_{PJSZIB}^{PTE}-\boldsymbol{\theta})\sqrt{n}(\hat{\boldsymbol{\theta}}_{PJSZIB}-\boldsymbol{\theta})^T \bigg] \nonumber\\
& = \mathbb{E}\bigg[\bigg(U_1- U_3 I_{\big(T_n \leq T_{(p_2,\alpha)}\big)}\bigg)\bigg(U_1- U_3 I_{\big(T_n \leq T_{(p_2,\alpha)}\big)}\bigg)^T \bigg] \nonumber\\
& = \mathbb{E}\bigg[U_1U_1^T\bigg]- \mathbb{E}\bigg[U_1U_3^T I_{\big(T_n \leq T_{(p_2,\alpha)}\big)}\bigg]- \mathbb{E}\bigg[U_1^TU_3 I_{\big(T_n \leq T_{(p_2,\alpha)}\big)}\bigg] + \mathbb{E}\bigg[U_3U_3^T I_{\big(T_n \leq T_{(p_2,\alpha)}\big)}\bigg]  \nonumber\\
& = cov(U_1)+ \mathbb{E}(U_1)\mathbb{E}(U_1^T) -2 \mathbb{E}\bigg[U_1^TU_3 I_{\big(T_n \leq T_{(p_2,\alpha)}\big)}\bigg]+
\mathbb{E}\bigg[U_3U_3^T I_{\big(T_n \leq T_{(p_2,\alpha)}\big)}\bigg]  \nonumber\\
& = \boldsymbol{F}^{-1} -2 \mathbb{E}\bigg[\mathbb{E}\bigg[U_1U_3^T I_{\big(T_n \leq T_{(p_2,\alpha)}\big)}|U_3 \bigg] \bigg]+  \mathbb{E}\bigg[U_3U_3^T I_{\big(T_n \leq T_{(p_2,\alpha})\big)}\bigg]  \nonumber\\
& =  \boldsymbol{F}^{-1} -2\mathbb{E}(U_1) \mathbb{E}\bigg[U_3^T I_{\big(T_n \leq T_{(p_2,\alpha)}\big)}\bigg] -2 cov(U1, U_3)[cov(U_3)]^{-1} \mathbb{E}\bigg[(U_3 -\mathbb{E}(U_3)) I_{\big(T_n \leq T_{(p_2,\alpha)}\big)}\bigg] \nonumber\\
& \qquad+  cov(U_3) \Phi_{p_2+2}(T_{(p_2, \alpha)}; \lambda)+\mathbb{E}(U_3)\mathbb{E}(U_3^T)  \Phi_{p_2+4}(T_{(p_4, \alpha)}; \lambda) \nonumber\\
& =  \boldsymbol{F}^{-1} + \boldsymbol{J}\boldsymbol{F}^{-1}\Phi_{p_2+2}(T_{(p_2, \alpha)}; \lambda)+\boldsymbol{W}\Phi_{p_2+4}(T_{(p_4, \alpha)}; \lambda) \nonumber
\end{align}
\end{appendices}

\end{document}